\documentclass{article} %
\usepackage{sgamex}  %

\usepackage{amsmath,amsfonts,amsthm,bm,bbm,xfrac,dsfont,stmaryrd}

\DeclareMathOperator*{\bigtimes}{\vartimes}

\def\refp#1{(\ref{#1})}

\def\eqref#1{equation~\ref{#1}}

\def\peqref#1{(\ref{#1})}

\def\1{\bm{1}}

\DeclareMathAlphabet{\mathsfit}{\encodingdefault}{\sfdefault}{m}{sl}
\SetMathAlphabet{\mathsfit}{bold}{\encodingdefault}{\sfdefault}{bx}{n}

\newcommand{\E}{\mathbb{E}}

\newcommand{\R}{\mathbb{R}}

\DeclareMathOperator*{\argmax}{arg\,max}

\makeatletter
\newtheorem*{rep@theorem}{\rep@title}
\newcommand{\newreptheorem}[2]{%
\newenvironment{rep#1}[1]{%
 \def\rep@title{#2 \ref{##1}}%
 \begin{rep@theorem}}%
 {\end{rep@theorem}}}
\makeatother

\newtheorem{theorem}{Theorem}
\newreptheorem{theorem}{Theorem}
\newtheorem*{theorem*}{Theorem}
\newtheorem{proposition}{Proposition}
\newreptheorem{proposition}{Proposition}
\newtheorem{corollary}{Corollary}
\newreptheorem{corollary}{Corollary}
\newtheorem{lemma}{Lemma}
\newreptheorem{lemma}{Lemma}

\newreptheorem{remark}{Remark}

\newtheorem{definition}{Definition}

\newenvironment{tightlist}
{\begin{list}{$\bullet$}{
    \setlength{\topsep}{0in}
    \setlength{\partopsep}{0in}
    \setlength{\itemsep}{0in}
    \setlength{\parsep}{0in}
    \setlength{\leftmargin}{1.5em}
    \setlength{\rightmargin}{0in}
    \setlength{\itemindent}{0in}
}
}%
{\end{list}
}
\setlength{\intextsep}{3pt}

\renewcommand{\P}{\mathbb{P}}
\usepackage{csquotes}

\newcommand{\N}{\mathbb{N}}

\usepackage{subcaption}

\usepackage{hyperref}
\usepackage{url}

\usepackage{booktabs}       %
\usepackage{subcaption}
\usepackage{tikz}
\usetikzlibrary{matrix}
\usepackage{comment}

\makeatletter
\newcommand*{\rom}[1]{\expandafter\@slowromancap\romannumeral #1@}
\makeatother

\newcommand{\pgl}{\texttt{PGL}}

\definecolor{green}{rgb}{0.194, 0.605, 0.316}
\definecolor{blue}{rgb}{0.191, 0.505, 0.742}

\usepackage[accepted]{icml2025}

\newcommand{\br}{\texttt{BR}}

\newcommand{\ma}{multi-agent}
\newcommand{\MA}{Multi-Agent}

\usepackage{tikz}
\usetikzlibrary{tikzmark}

\usepackage[textsize=tiny]{todonotes}

\icmltitlerunning{Convex Markov Games}

\begin{document}

\twocolumn[
\icmltitle{Convex Markov Games: A New Frontier for \MA{} Reinforcement Learning}

\icmlsetsymbol{equal}{*}

\begin{icmlauthorlist}
\icmlauthor{Ian Gemp}{gdm}
\icmlauthor{Andreas Haupt}{mit}
\icmlauthor{Luke Marris}{gdm}
\icmlauthor{Siqi Liu}{gdm}
\icmlauthor{Georgios Piliouras}{gdm}
\end{icmlauthorlist}

\icmlaffiliation{mit}{College of Computing, MIT, Cambridge, MA, USA}
\icmlaffiliation{gdm}{Google DeepMind, London, UK}

\icmlcorrespondingauthor{Ian Gemp}{imgemp@google.com}

\icmlkeywords{Markov Decision Process, Nash Equilibrium, Markov Game, N-player General-Sum, Convex Optimization, Occupancy Measure}

\vskip 0.3in
]

\printAffiliationsAndNotice{}  %

\begin{abstract}
Behavioral diversity, expert imitation, fairness, safety goals and others give rise to preferences in sequential decision making domains that do not decompose additively across time. We introduce the class of convex Markov games that allow general convex preferences over occupancy measures. Despite infinite time horizon and strictly higher generality than Markov games, pure strategy Nash equilibria exist. Furthermore, equilibria can be approximated empirically by performing gradient descent on an upper bound of exploitability. Our experiments reveal novel solutions to classic repeated normal-form games, find fair solutions in a repeated asymmetric coordination game, and prioritize safe long-term behavior in a robot warehouse environment. In the prisoner's dilemma, our algorithm leverages transient imitation to find a policy profile that deviates from observed human play only slightly, yet achieves higher per-player utility while also being three orders of magnitude less exploitable.
\end{abstract}

\section{Introduction}\label{sec:intro}

Modern solutions to (perfect-information) \ma{} reinforcement learning (MARL) typically build upon the Markov game (MG) model~\citep{littman1994markov}, a generalization of the classical single agent Markov decision process (MDP) to the \ma{} setting (see Table~\ref{tab:cover}, Linear Loss).

\begin{table}[ht!]
    \centering
    \begin{tabular}{c||c|c}
        \# of Agents & Linear Loss & Convex Loss (Concave $u$) \\ \hline\hline
        \rule{0pt}{3ex} $n=1$ & MDP~[1] & convex MDP~[2] \\ [5pt] \hline
        \rule{0pt}{3ex} $n>1$ &  \tikzmarknode{MG}{MG} [3] &  \tikzmarknode{cMG}{(\textbf{Ours}) convex MG} \\ [5pt]
    \end{tabular}
    \caption{We introduce the convex Markov Game (cMG) to fill a gap at the intersection of \ma{} and reinforcement learning research. We demonstrate how to use cMGs to model a variety of utilities $u$ for a linear reward function $r$: (a) Creativity, using an entropy bonus, (b) Imitation, using a divergence from a reference occupancy measure, (c) Fairness, using a state-wise penalty, and (d) Safety, with a non-smooth loss. This paper establishes existence of equilibria for such utilities, proposes a (sub)differentiable loss for its computation, and demonstrates its use in selecting desirable \ma{} behaviors in each setting, at times by tracing a homotopy from a cMG \emph{back} to an MG. \emph{References}: [1]~\citet{bellman1966dynamic}; [2]~\citet{zahavy2021reward}; [3]~\citet{shapley1953stochastic}.}
    \label{tab:cover}
\end{table}

Similarly, the MDP has provided the foundation for development of single agent RL research. However, modern solutions to single agent RL now build upon the convex MDP (cMDP), an extension of the MDP that allows for more expressive preferences over agent behavior that do not decompose additively across time~\citep{zhang2020variational}. Mathematically, convex MDPs allow agent goals to be expressed as general convex functions of their long run occupancy measure (the frequency with which the agent takes a given action in a given state). An exemplar is the goal of maximizing the entropy of an agent's occupancy measure to encourage exploration~\citep{hazan2019provably}; imitation, risk-aversion, robustness, and other goals are also common~\citep{mutti2022challenging,garcia2015comprehensive}. In the case where the goal is expressed as a linear function of the occupancy measure, cMDPs reduce to MDPs.

Naturally, MARL researchers have adopted this generalization empirically in their work, for example to encourage agent curiosity and the discovery of new and interesting strategies in games like Chess~\citep{zahavy2023diversifying}.
We remark ($\dagger$) on the advantages of these more general objectives in selecting desireable equilibria in experiments.

However, there has yet to be a formal definition of a framework extending cMDPs to the \ma{} setting (Table~\ref{tab:cover}). More troubling is the the lack of any analysis proving Nash equilibria, the central solution concept in \ma{} and game theory research, even exist in this setting.

In this work, we formally define the convex Markov game (cMG) and prove the existence of pure Nash equilibria by appealing to topological arguments. From a theoretical perspective, this result is interesting as properties of cMGs break the assumptions leveraged by prior work on Markov games (e.g., convex best response correspondences and existence of value functions). In addition, we derive a (sub)differentiable upper bound on the exploitability of a policy profile that can be efficiently computed and optimized. Lastly, we describe several use cases for cMGs beyond just curiosity, including imitation, fairness, and safety. Using our derived upper bound, we solve for approximate Nash equilibria in each of these settings, showing improved performance over sensible baselines. In some cases, we construct a deformation (homotopy) from a cMG to an MG that allows us to derive police profiles that despite being similar to observed human play, exhibit higher welfare and orders of magnitude lower exploitability.

The article is structured as follows. We define the class of convex Markov games in \autoref{sec:prelims}. We show that pure-strategy Nash equilibria exist in \autoref{sec:ne_exist}. Our analysis of a loss for equilibrium computation is in \autoref{sec:loss}, which we put to practice in our experiments in \autoref{sec:experiments}. We discuss related literature in \autoref{sec:litrev} and conclude in \autoref{sec:conclusion}.

\section{Convex Markov Games}\label{sec:prelims}
The main definition of this article is the \emph{convex Markov Game} (cMG).
\begin{definition}\label{def:cmg}
A convex Markov game is given by a $6$-tuple $\mathcal G = \langle\mathcal S, \mathcal A = \bigtimes_{i=1}^n \mathcal A_i, P, u, \gamma, \mu_0\rangle$:
\begin{tightlist}
    \item Players $i=1, 2, \dots, n$,
    \item a finite state space $\mathcal S$ and initial distribution $\mu_0 \in \Delta^{\mathcal S}$,
    \item finite action spaces $\mathcal A_1, \mathcal A_2, \dots, \mathcal A_n$\footnote{Action spaces can be generalized to have state-dependence.},
    \item a state transition function $P \colon \mathcal S \times (\bigtimes_{i=1}^n \mathcal A_i) \to \Delta^{\mathcal S}$,
    \item a discount factor $\gamma \in [0,1)$, and
    \item a set of continuous utilities, each concave in the $i$th player's occupancy measure, $u_i \colon (\bigtimes_{j=1}^n \Delta^{\mathcal S \times \mathcal A_j}) \to \R$.
\end{tightlist}
\end{definition}
\paragraph{The Policy View} Players $i = 1, \ldots, n$ choose (\emph{stationary}) policies $\pi_i \colon \mathcal S \times \mathcal A_i \to [0,1]$. A policy profile $\pi = (\pi_1, \pi_2, \dots, \pi_n)$ induces a sequence of states and joint actions $(s_t)_{t \in \N}$ and $(a_t)_{t \in \N}$, and a state-action occupancy measure
\begin{align*}
    \mu^\pi (s,a) = (1-\gamma) \sum_{t=0}^\infty \gamma^t \P (s_t = s, a_t = a \vert \mu_0, \pi, P).
\end{align*}
We can recover state-action occupancies  from a matrix equation. Let $P^{\pi}$ be the state transition matrix under $\pi$. Then the state occupancy measure $\mu^s(s) = \sum_a \mu(s,a)$ can be written as a function of $\pi$ in matrix notation as
\begin{align}
    \mu^s(\pi) &= (1 - \gamma) \Big([I - \gamma P^{\pi}]^{-1} \mu_0\Big),
    \label{eqn:nonneg_i}
\end{align}
and the state-action occupancy can be recovered as $\mu(s,a) = \mu^s(s) \cdot \pi(a \vert s)$. The marginal on this probability for only agent $i$'s actions can be recovered as
\begin{align}
    \mu_i(\pi_i, \pi_{-i}) &= (1 - \gamma) ([I - \gamma P^{\pi}]^{-1} \mu_0 \mathbf{1}_{\vert \mathcal{A}_i \vert}^\top) \odot \pi_i, \label{eqn:pi_to_occ_i}
\end{align}
where $\odot$ denotes the Hadamard product and $\pi_{-i}$ indicates all policies except player $i$'s.

Players may choose to randomize over stochastic policies. We say that a player's strategy $\rho_i$ is a \emph{mixed-strategy} if it is a distribution over policies. For example, let $\pi_i^{(a)}$ and $\pi_i^{(b)}$ both be stochastic policies. Then an example of a mixed-strategy is one in which player $i$ plays $\pi_i^{(a)}$ and $\pi_i^{(b)}$ with equal probability $\sfrac{1}{2}$. If all probability mass of $\rho_i$ is on a single policy $\pi_i$, we call $\rho_i$ a \emph{pure-strategy} and write $\pi_i$ directly.

We say that a (random) policy profile $\rho = (\rho_1, \rho_2, \dots, \rho_n)$ is a \textit{(mixed-strategy) Nash equilibrium} if for all agents $i=1, 2, \dots, n$, 
\[
\E_{\pi \sim \rho} [u_i(\mu(\pi_i, \pi_{-i}))] \ge \E_{\pi_i' \sim \rho_i', \pi_{-i} \sim \rho_{-i}}[u_i(\mu(\pi_i', \pi_{-i}))],
\]
for any policy $\rho_i'$. If $\rho$ is a pure strategy profile as defined above, we call it a \textit{pure-strategy Nash equilibrium}.

We call any policy $\tilde{\pi}_i$ (or occupancy measure $\tilde{\mu}_i$) player $i$'s \emph{best response} to a profile $\rho$ if it achieves an expected utility that is maximal among all policies (respectively, occupancy measures). Note best responses are the solutions to each individual player's MDP with all other player policies held fixed, and so they necessarily \emph{can} exist as pure strategies: $\tilde{\pi} \in \argmax_{\pi_i'} \E_{\pi_{-i} \sim \rho_{-i}}[u_i(\mu(\pi_i', \pi_{-i}))]$

\paragraph{The Occupancy Measure View} In our analysis and the algorithms proposed in this article, it will be helpful to frame and solve problems directly in the space of occupancy measures $\mathcal{U}$.

\begin{figure*}
    \centering
    \begin{subfigure}[b]{0.32\textwidth}
        \centering
        \includegraphics[height=0.65\linewidth]{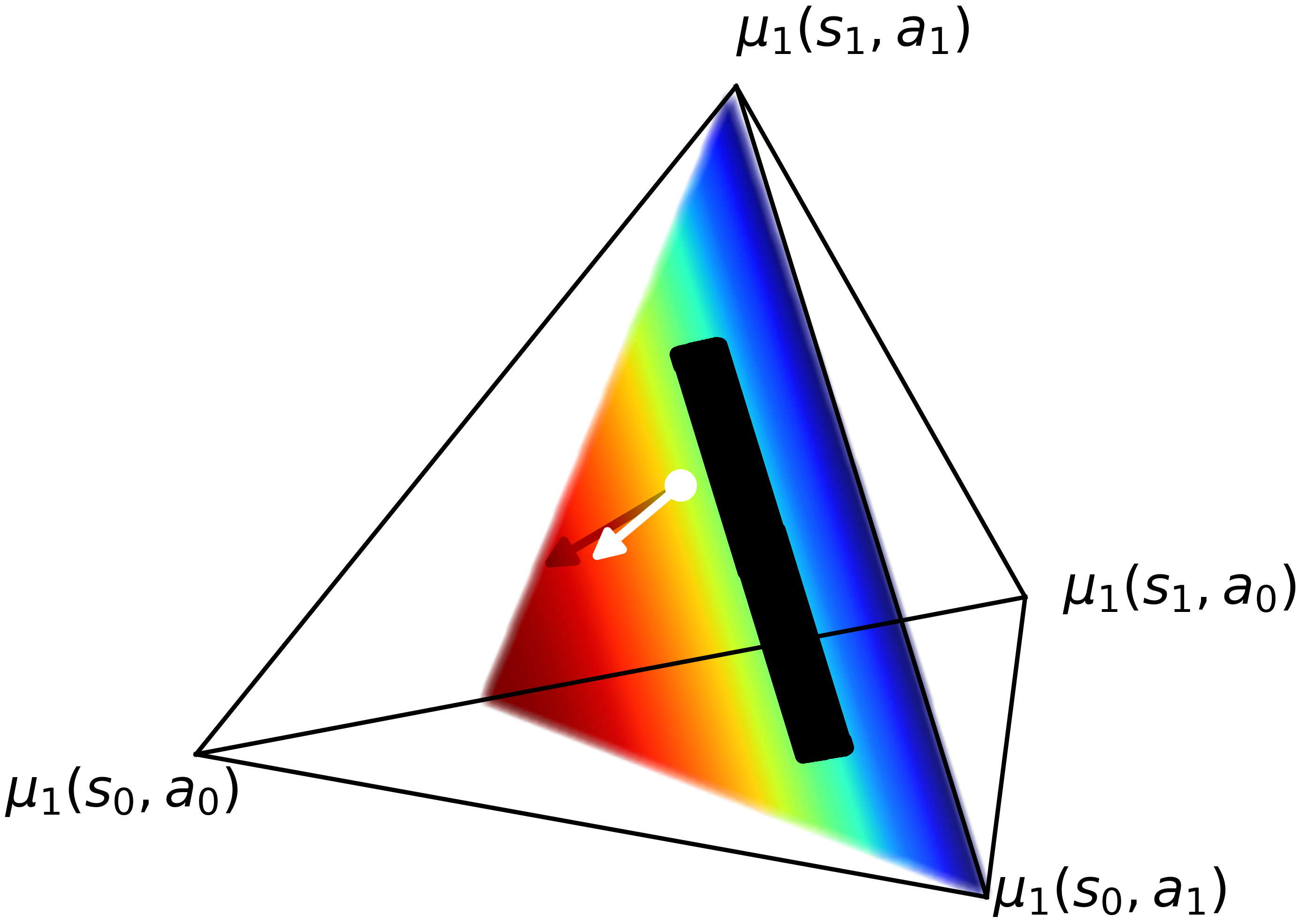}
        \caption{Occupancy Measure Space ($\mathcal{U}_1$)}
        \label{fig:nonconvex_brp:occ}
    \end{subfigure}
    \begin{subfigure}[b]{0.32\textwidth}
        \centering
        \includegraphics[height=0.65\linewidth]{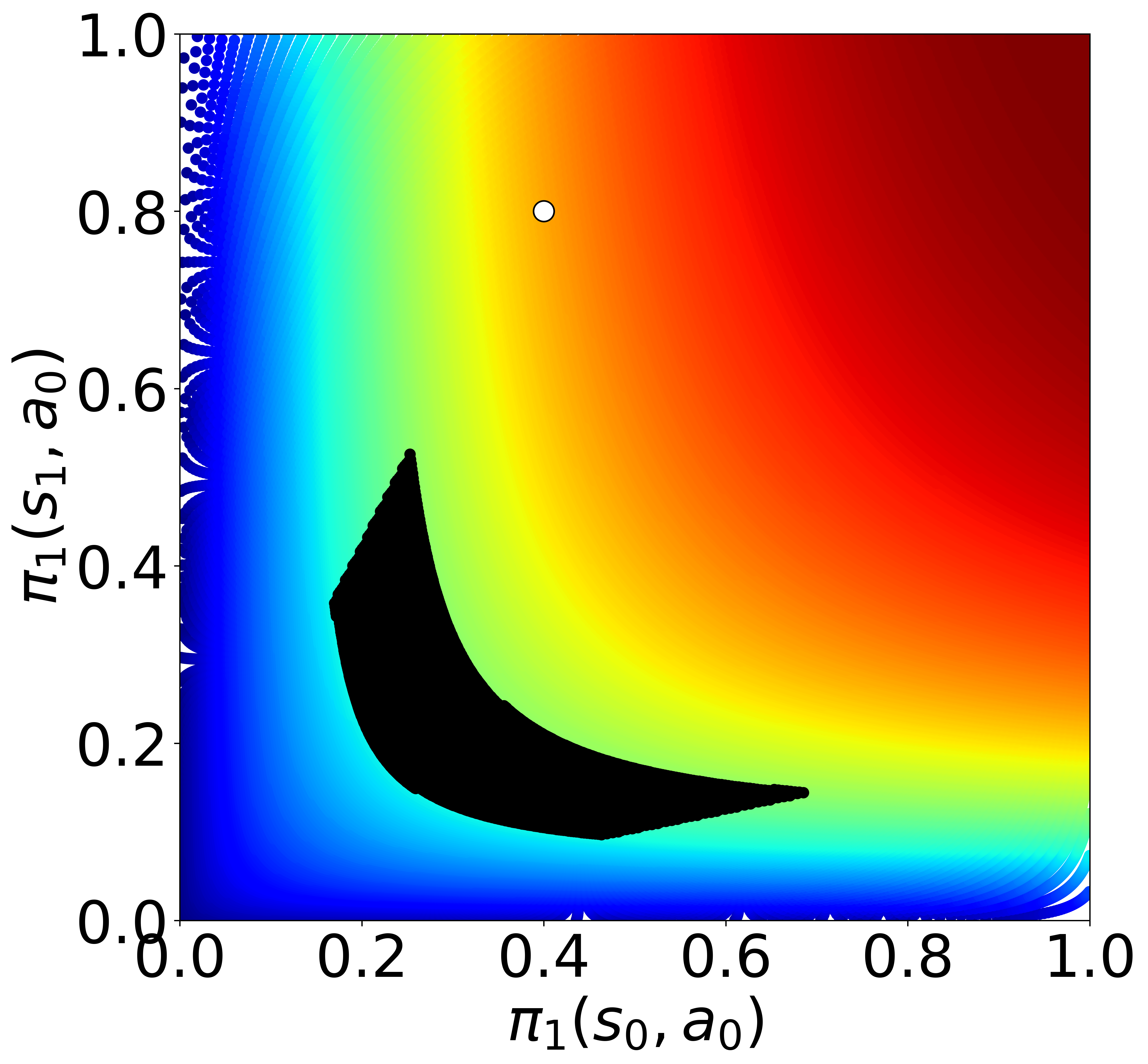}
        \caption{Policy Space ($\Pi_1$)}
        \label{fig:nonconvex_brp:policy}
    \end{subfigure}
    \begin{subfigure}[b]{0.32\textwidth}
        \includegraphics[height=0.65\linewidth]{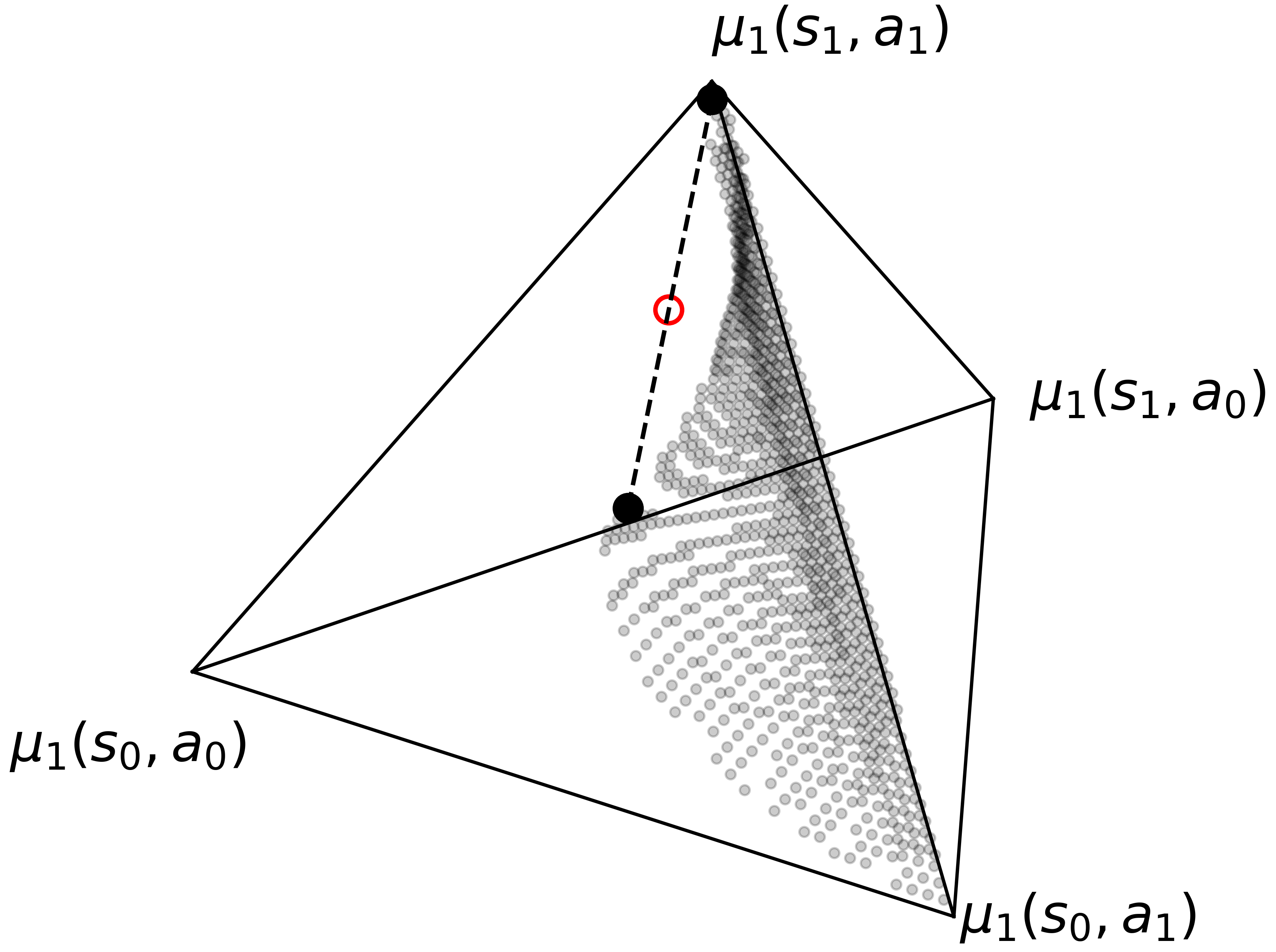}
        \caption{Joint Occupancy Measure Space ($\mathcal{U}$)}
        \label{fig:nonconvex_brp:joint_mu}
    \end{subfigure}
    \caption{(A convex Markov Game) A $2$-player, $2$-state, $2$-action convex Markov game. Colors help visualize the effect of the nonlinear transformation from occupancy space to policy space~\peqref{eqn:pi_from_occ}. Full environment details are in Appendix~\ref{app:games:gridworlds}. (\subref{fig:nonconvex_brp:occ}) The feasible set of occupancy measures for player $1$ given a fixed policy for player $2$ with best response region shown in black. Player $1$'s occupancy gradient $\nabla^i_{\mu_i}$ points off the $3$-simplex so is not shown. Instead, we show $\nabla^i_{\mu_i}$ projected onto the tangent space of the feasible set in white, i.e., $\Pi_{ T\mathcal{U}_i}(\nabla^{i}_{\mu_i})$. The vector shadowing the white one is $\nabla^i_{\mu_i}$ projected onto the tangent space of the $3$-simplex; note this vector points off the feasible set (into the page). (\subref{fig:nonconvex_brp:policy}) A $2$-d slice of player $1$'s feasible policy space. The set of player $1$'s best response policies to $\pi_2$ (white dot) in black is non-convex when viewed in policy space. (\subref{fig:nonconvex_brp:joint_mu}) We consider the joint occupancy measure space $\mathcal{U}$ where $\pi_1 = \pi_2$ (implies $\mu_1 = \mu_2$). Black dots indicate $\mu_1$ where the maximum Bellman flow constraint violation \peqref{eqn:bellman_flow_i} is less than $0.01$. The line connects two points from the feasible set whose midpoint (in red) lies outside the set, revealing the non-convexity of $\mathcal{U}$.}
    \label{fig:nonconvex_brp}
\end{figure*}

Given opponent policy profiles $\pi_{-i}$ we can define the probability kernel $P_i^{\pi_{-i}}$ as
\begin{align}
    P_i^{\pi_{-i}}(s' \vert s, a_i) &= \sum_{a_{-i} \in \mathcal{A}_{-i}} P(s' \vert s, a_i, a_{-i}) \prod_{j \ne i} \pi_j(a_j \vert s). \nonumber %
\end{align}
Given these, we can reformulate an individual agent's decision problem as
\begin{align}
    \max_{\mu_i \in \mathbb{R}^{\vert \mathcal{S} \vert \times \vert \mathcal{A}_i \vert}} u_i(&\mu_i, \pi_{-i}) \label{eqn:br_obj}
    \\ s.t. \quad \mu_i &\ge 0 \label{eqn:nonneg_mui}
    \\ \sum_{a_i \in \mathcal{A}_i} \big(I - \gamma P^{\pi_{-i}}_{i}(\cdot \vert \cdot, a_i) \big) \mu_i(\cdot, a_i) &= (1 - \gamma) \mu_0 \label{eqn:bellman_flow_i}
\end{align}
where $P^{\pi_{-i}}_{i}(\cdot \vert \cdot, a_i)$ is a next-state by state transition matrix and $\mu_i(\cdot, a_i)$ is a vector denoting the probability of taking action $a_i$ in every state. Because $u_i$ are concave in $\mu_i$, and given the linearity of the constraints in~\peqref{eqn:nonneg_mui} and~\peqref{eqn:bellman_flow_i}, this problem is convex, motivating the name of \emph{convex} Markov Games (cMGs). If $u_i(\mu_i, \pi_{-i}) = r_i(\pi_{-i})^\top \mu_i$ where $r_i$ is a flattened state-action reward vector and $\mu_i$ is similarly flattened, we recover~\citeauthor{littman1994markov}'s Markov game framework (as mentioned in Table~\ref{tab:cover}), however cMGs allow a much wider class of utilities (e.g., ones that include entropy of the occupancy measure). Notice that the feasible set in~\peqref{eqn:bellman_flow_i} is defined by constraints that depend on $\pi_{-i}$. For convenience, we define the so-called \emph{action correspondence}, $\mathcal{U}_i = \mathcal{M}_i(\pi_{-i})$, which maps every profile of opponent policies to the feasible set of occupancy measures for player $i$, i.e, problem~\peqref{eqn:bellman_flow_i} can be succinctly written
\begin{align}
    \max_{\mu_i \in \mathcal{M}_i(\pi_{-i})} u_i(\mu_i, \pi_{-i}). \label{eqn:occ_program}
\end{align}
We will use the feature that policies can be recovered from occupancy measures as
\begin{align}
    \pi_i(\mu_i)(a \vert s) &= \begin{cases}
        \frac{\mu_i(s, a)}{\sum_{a'} \mu_i(s, a')} & \text{ if } \sum_{a'} \mu_i(s, a') > 0
        \\ \text{any } \pi_i^s \in \Delta^{\vert \mathcal{A}_i \vert} & \text{ otherwise}.
    \end{cases} \label{eqn:pi_from_occ}
\end{align}

\section{Existence of Nash Equilibrium}\label{sec:ne_exist}

We first show that under fairly general assumptions, mixed Nash equilibria exist.
We then argue that for the purposes of learning, pure-strategy Nash equilibria are particularly desirable, and show that these exist as well.

\begin{proposition}\label{theorem:ne_as_mixtures}
Mixed-strategy Nash equilibria exist in convex Markov Games.
\end{proposition}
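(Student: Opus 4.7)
The plan is to invoke a standard fixed-point argument à la Glicksberg/Kakutani–Fan–Glicksberg. The building blocks are already present in the setup: each player's pure-strategy space $\Pi_i$ (stationary policies $\pi_i \colon \mathcal{S} \times \mathcal{A}_i \to [0,1]$) is a product of $\vert \mathcal{S} \vert$ probability simplices over $\mathcal{A}_i$, hence a nonempty compact convex subset of a finite-dimensional Euclidean space. Lifting to mixed strategies, the space of Borel probability measures $\Delta(\Pi_i)$ equipped with the weak-$*$ topology is then nonempty, convex, and compact (by Prokhorov, since $\Pi_i$ is compact metric), and the product $\bigtimes_i \Delta(\Pi_i)$ inherits these properties.

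The next step is continuity of payoffs. I would first show that the occupancy map $\pi \mapsto \mu(\pi)$ is continuous on $\bigtimes_i \Pi_i$: the entries of $P^\pi$ are polynomial in $\pi$, the operator $I - \gamma P^\pi$ is uniformly invertible because $\gamma < 1$ bounds its spectrum away from $1$ (so $\Vert (I-\gamma P^\pi)^{-1} \Vert$ is uniformly bounded and the inverse varies continuously in the entries), and then \peqref{eqn:pi_to_occ_i} gives continuity of $\mu_i(\pi)$. Composing with continuity of $u_i$ from Definition~\ref{def:cmg} yields continuity of $\pi \mapsto u_i(\mu(\pi))$ on a compact set; in particular $u_i$ is bounded. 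Extending to mixed profiles, $U_i(\rho) := \E_{\pi \sim \rho}[u_i(\mu(\pi))]$ is continuous in $\rho$ in the weak-$*$ topology (integration of a bounded continuous function against a weakly convergent sequence of product measures), and crucially it is \emph{linear}, hence concave, in the marginal $\rho_i$.

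Having assembled these ingredients, I would conclude by a direct application of Glicksberg's generalization of the Kakutani–Fan fixed-point theorem to the best-response correspondence $\mathrm{BR}(\rho) = \bigtimes_i \argmax_{\rho_i' \in \Delta(\Pi_i)} U_i(\rho_i', \rho_{-i})$. Continuity of $U_i$ and compactness of $\Delta(\Pi_i)$ give that each $\mathrm{BR}_i(\rho_{-i})$ is nonempty and compact; linearity in $\rho_i$ gives convex-valuedness; continuity of $U_i$ together with the Berge maximum theorem gives upper hemicontinuity (equivalently, a closed graph on a compact space). Glicksberg's theorem then produces a fixed point $\rho^* \in \mathrm{BR}(\rho^*)$, which is by definition a mixed-strategy Nash equilibrium.

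The main obstacle I expect is not the fixed-point step itself, but justifying the continuity chain carefully, because the paper deliberately flags that $\mu_i$ is \emph{not} linear in $\pi_i$ and the feasible set $\mathcal{M}_i(\pi_{-i})$ deforms with the opponent profile (see Figure~\ref{fig:nonconvex_brp}). What saves us at the \emph{mixed}-strategy level is that we are not trying to make $u_i$ concave in $\pi_i$; we only need concavity (in fact linearity) in $\rho_i$, which the expectation supplies automatically, and continuity of $\pi \mapsto \mu(\pi)$, which is guaranteed once $\gamma < 1$. These precisely bypass the non-convexities highlighted in the figure and reserve the harder work for the subsequent pure-strategy existence result.
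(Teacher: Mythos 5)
Your proposal is correct and follows essentially the same route as the paper: establish continuity of $\pi \mapsto \mu(\pi)$ (and hence of the utilities) via the uniform invertibility of $I - \gamma P^\pi$ for $\gamma < 1$, note that the policy sets are compact and convex, and invoke Glicksberg's fixed-point-based existence theorem for continuous games. The paper simply defers the continuity step to Lemma~\ref{lemma:occ_from_pi_diff} in Appendix~\ref{app:occ_from_pi} and cites \citet{glicksberg1952further} for the machinery you spell out explicitly (mixed extensions, linearity in $\rho_i$, Berge, and the fixed point).
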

This statement follows easily by using the policy view. Each player's optimization problem in terms of policies is:
\begin{align}
    \max_{\pi_i \in (\bigtimes_{s=1}^{\vert \mathcal{S} \vert} \Delta^{\mathcal{A}_i})} &u_i(\mu_i(\pi_i, \pi_{-i}), \pi_{-i}). \label{eqn:loss_i:pi}
\end{align}
Appendix~\ref{app:occ_from_pi} shows that all $u_i$ are continuous, differentiable functions of $(\pi_i, \pi_{-i})$, and since the strategy sets are convex and compact, the existence of  mixed-strategy NE follows from classical existence results~\citep{glicksberg1952further}.

While existence of mixed-strategy equilibria are a descriptively helpful tool, they have limitations for applications in learning.
In particular, learning a continuous distribution over stochastic policies would practically require function approximation or defining a mesh over the space, greatly increasing the complexity of the learning problem; this challenge has been broached but not sufficiently solved in the context of, for example, generative adversarial networks~\citep{arora2017generalization}.
Therefore, we provide another statement that ensures the existence of pure-strategy Nash equilibria.

\begin{reptheorem}{theorem:ne_as_single_pi}
Pure-strategy Nash equilibria exist in convex Markov Games.
\end{reptheorem}
The proof of this statement is in Appendix~\ref{app:eq_proof} and relies on topological arguments, in particular the \emph{contractibility} of best response sets~\citep{debreu1952social,kosowsky2023nash}. While best response sets are convex in occupancy measure space, they are non-convex in policy space (Figure~\ref{fig:nonconvex_brp}\subref{fig:nonconvex_brp:occ}-\subref{fig:nonconvex_brp:policy}). This breaks the assumptions of the~\citet{kakutani1941generalization} fixed point theorem, which provides the foundation for many equilibrium existence results in \ma{} games including cMG's parent framework of Markov games~\citep{fink1964equilibrium}, and is one reason why we must appeal to more general theory here.

A set is \emph{contractible} if there exists a continuous homotopy (deformation) that shrinks that set to a point. Intuitively, if best response sets are contractible then they behave topologically similarly to points, and so one would expect the principles of the~\citet{brouwer1911abbildung} fixed point theorem that~\citet{nash1950equilibrium} famously applied in his own existence proof should apply. In fact, all convex sets are contractible, and so~\citeauthor{kakutani1941generalization}'s fixed point theorem also follows from this argument. Despite their non-convexity, best response sets in cMGs are contractible, which the reader can confirm visually for the example in Figure~\ref{fig:nonconvex_brp}\subref{fig:nonconvex_brp:policy}.

One additional obstruction to the analysis of cMGs is the forfeiture, in the underlying cMDPs, of the recurrence relation known as the Bellman equation, which gives rise to the notion of a value function. In fact,~\citet{fink1964equilibrium} used the existence of value functions along with~\citeauthor{kakutani1941generalization}'s fixed point theorem to prove existence of NE in Markov (stochastic) games. We circumvented this obstruction with more general equilibrium arguments, but the lack of access to traditional value functions presents a further obstacle to designing efficient algorithms which we broach next.

\section{Computation of Equilibria}\label{sec:loss}
While equilibria exist, even computing NEs in (vanilla) MGs is PPAD-hard~\citep{rubinstein2015inapproximability,daskalakis2023complexity}\textemdash since cMGs generalize MGs, computing NEs in cMGs is at least as hard.
Nevertheless, we present a practical gradient-based approach for approximating equilibria that we find works well empirically. 

The most common notion of approximation for NEs is \emph{exploitability} ($\epsilon$), defined as
\begin{align}
    \epsilon &= \max_{i=1, \dots, n} \epsilon_i \label{eqn:exp}
    \\ \text{where } \epsilon_i &= \max_{z \in \mathcal{M}_i(\pi_{-i})} u_i(z, \pi_{-i}) - u_i(\mu_i, \pi_{-i}). \label{eqn:exp_i}
\end{align}
Exploitability measures the most any player can gain by unilaterally deviating to another policy (equiv., occupancy measure). Mechanistically, in cMGs, it corresponds to each player solving problem~\peqref{eqn:occ_program}\textemdash a constrained, convex optimization problem\textemdash and then reporting the value they attained beyond that of their strategy under the approximate equilibrium profile $\pi$. In particular, a policy profile $\pi$ with vanishing exploitability ($\epsilon=0$) is a pure-strategy Nash equilibrium.

\textbf{High Level Approach}: It should also be clear from the $\max$ in~\peqref{eqn:exp_i} that exploitability~\peqref{eqn:exp} is always non-negative. Therefore, one can imagine solving for an NE (where $\epsilon=0$) by minimizing exploitability, i.e., using it as a loss function. Exploitability is relatively expensive to compute as it requires solving $n$ convex programs, so we instead derive a cheap upper bound whose gap is controlled by a hyperparameter $\tau$. Exploitability (as well as our upper bound) is non-convex and hence naive gradient descent is not guaranteed to find a global minimum. However, in what follows, we leverage ``temperature annealing'' ideas that have been successful in several other game classes (normal-form/extensive-form/Markov) and find them beneficial empirically for cMGs as well.

The next result extends that of~\citet{gemp2023approximating} from the normal-form game setting to show that exploitability is bounded from above by a constant depending on the action space size, and a projected gradient (compare the white vector to the one behind it in Figure~\ref{fig:nonconvex_brp}\subref{fig:nonconvex_brp:occ}).
The following result bounds the exploitability of a cMG with utilities $u_i$ using players' utility-gradients of a cMG with a small amount of entropy regularization, i.e. $u_i^{\tau}(\mu_i, \pi_{-i}) = u_i(\mu_i, \pi_{-i}) + \tau H(\mu_i)$, where $H$ denotes Shannon entropy.

\begin{reptheorem}{theorem:qre_to_exp}[Low Temperature Approximate Equilibria are Approximate Nash Equilibria]
Let $\nabla^{i\tau}_{\mu_i}$ be player $i$'s entropy regularized gradient and ${\pi}$ be an approximate equilibrium of the entropy-regularized game with $\tau > 0$. Then,
\[
    \epsilon_i({\pi}) \le \tau \log(\vert \mathcal{S} \vert \vert\mathcal{A}_i\vert) + \sqrt{2} \|\Pi_{ T\mathcal{U}_i}(\nabla^{i\tau}_{\mu_i})\|,
\]
where $\Pi_{ T\mathcal{U}_i}$ is a projection onto the tangent space of $\mathcal{U}_i$, and $\nabla^{i\tau}_{\mu_i}$ is the gradient of $u_i^{\tau}$ with respect to $\mu_i$.
\end{reptheorem}

We can give more intuition for the analysis of the projection operator. If $A(\pi_{-i}) \in \R^{\lvert\mathcal{S}\rvert \times (\lvert\mathcal{S}\rvert \cdot \lvert\mathcal{A}_i\rvert)}$ such that $A(\pi_{-i})\mu_i = (1-\gamma)\mu_0$ represents the linear equality constraints in~\peqref{eqn:bellman_flow_i}, then the projection matrix is given by
\begin{align}
    \Pi_{ T\mathcal{U}_i(\pi_{-i})} &= I_{\vert S \vert \times \vert S \vert} - A^\top (A A^\top)^{-1} A. \label{eqn:proj_tangent}
\end{align}
For ease of notation, we omit the dependence of $A$ on $\pi_{-i}$. $\Pi_{ T\mathcal{U}_i(\pi_{-i})}$ is differentiable if and only if $A(\pi_{-i})$ has full row rank (i.e., rank $\vert\mathcal{S}\vert$). Lemma~\ref{lemma:bellman_flow_full_rowrank} in Appendix~\ref{app:occ_from_pi} shows that $A(\pi_{-i})$ has full row rank.

Inspired by the bound in Theorem~\ref{theorem:qre_to_exp}, we define the following projected-gradient loss function for cMGs:
\begin{align}
    \mathcal{L}^{\tau}({\pi}) &= \sum_i ||\Pi_{ T\mathcal{U}_i}(\nabla^{i\tau}_{\mu_i})||^2.
\end{align}
As a consequence of \autoref{theorem:qre_to_exp}, we obtain Corollary~\ref{corollary:qre_to_ne}
\begin{align}
    \epsilon({\pi}) &\le \tau \log(\vert \mathcal{S} \vert \max_i \vert \mathcal{A}_i \vert) + \sqrt{2n \mathcal{L}^{\tau}({\pi})}.
\end{align}

We can directly minimize $\mathcal{L}^{\tau}$ over the space of policy profiles $\pi$.
Each player's policy is subject to simplex constraints that are independent of the other players.
In this way, we combine the two distinct strengths of the policy and occupancy measure views. In the policy view, player strategy sets ($\Pi_i$) are independent and convex, making for simple independent updates and projections back to the feasible sets. In the occupancy measure view ($\mathcal{U}_i$), we enjoy convexity of player losses which enable derivation of upper bounds on exploitability.

\begin{figure*}[ht]
    \centering
    \includegraphics[width=\textwidth]{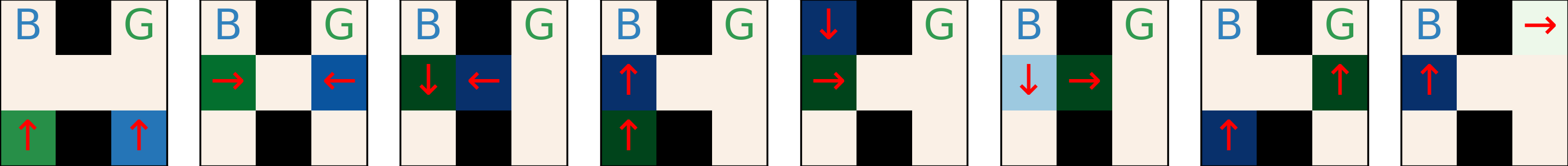}
    \caption{(Approximate \pgl{} Equilibrium) Arrows indicate each player's most likely action (4 cardinal directions \& ``Stay'') with their color indicating their probability (lightest occurs at $0.20$ probability, darkest at $1.0$). Moves to the same location are awarded randomly to one of the agents. The goal state is marked by \textcolor{blue}{B} and \textcolor{green}{G} where green reaches the top right corner and blue, the top left, simultaneously.}
    \label{fig:pathfinding}
\end{figure*}

Note that our proposed loss is composed of a projection operator $\Pi_{ T\mathcal{U}_i}$, the gradients of our concave utilities $u_i$, and the mapping from policies to occupancy measures~\peqref{eqn:pi_to_occ_i}.
All of these components are differentiable assuming that the utilities are also differentiable\footnote{Strict differentiability of utilities is not required by \emph{autodiff} libraries, e.g., JaX~\citep{jax2018github}.}.

This yields a differentiable loss and allows for optimization using automatic differentiation. If we represent agent policies in an unconstrained space with \emph{logits}, we can then minimize $\mathcal{L}^{\tau}$ directly with respect to agent policies using our preferred unconstrained optimizer ``Opt'', e.g., Adam. In addition, we repeatedly anneal $\tau$, intended to mimic standard protocols for normal-form, extensive-form, and (vanilla) Markov games~\citep{mckelvey1995quantal, mckelvey1998quantal, gemp2022sample, eibelshauser2023sgamesolver}. We refer to this approach as \emph{projected-gradient loss} minimization (\texttt{PGL}), with pseudocode in Algorithm~\ref{alg:pgl}.

\begin{algorithm}[ht]
\caption{Projected-Gradient Loss Minimization (\pgl{})}
\label{alg:pgl}
\begin{algorithmic}[1]
    \STATE Given: Initial profile $\pi$, temperature schedule $\tau_t$
    \FOR{$t = 0, \ldots, T$}
        \STATE $\pi \leftarrow \text{Opt}(\pi, \nabla_{\pi} \mathcal{L}^{\tau_t})$
    \ENDFOR
    \STATE Output: $\pi$
\end{algorithmic}
\end{algorithm}

Note that Algorithm~\ref{alg:pgl} does not come with any convergence guarantees. Although we take inspiration from previous approaches which carefully trace a continuum of equilibria throughout the annealing process~\citep{turocy2005dynamic}, cMGs introduce a distinct challenge to precisely replicating this family of homotopy methods. The family of homotopy (annealing) methods we imitate rely on first solving for the equilibrium of a transformed game. In prior game classes, this meant solving for the maximum entropy profile which is easy (all players play uniform strategies). In cMGs, even this step is complex. While the objective of maximizing the sum of the entropy of each player's occupancy measure is concave, the feasible set of joint occupancy measures is non-convex (see Figure~\ref{fig:nonconvex_brp:joint_mu}). Hence, even defining the starting point for a homotopy process that imitates prior approaches is difficult.

\section{Experiments}\label{sec:experiments}

We test a variety of \emph{non}linear utilities in several domains. In the first set of (creativity-based) domains, we compare against four baseline algorithms, and discuss and contrast the resulting exploitability and policy profiles. The resulting experiments demonstrate other distinct use cases of cMGs.

\paragraph{Baselines.} 
The first baseline, $\min\epsilon$, directly minimizes exploitability using a differentiable convex optimization package \textsc{cvxpylayers} in \textsc{JAX}~\citep{agrawal2019differentiable,jax2018github}. In the second baseline, \texttt{Sim}, all players simultaneously run gradient descent on their losses with respect to their policies and we report the performance of the running average of the policy trajectory. Policies at each state are represented in $\mathbb{R}^{\lvert\mathcal{A}_i\rvert - 1}$ as a softmax over $\vert \mathcal{A}_i \vert - 1$ logits with the last logit fixed as $0$. In the third, \texttt{RR}, agents alternate gradient descent steps in round-robin fashion. We also compare against the \textsc{sgamesolver}~\citep{eibelshauser2023sgamesolver} package of homotopy methods for Markov games.

\paragraph{Hyperparameters.} %
We minimize $\mathcal{L}^{\tau_t}(\pi)$ with Adam; its internal state is not reset after annealing. Three types of annealing schedules $\tau_t$ are used for entropy regularization (Appendix~\ref{app:hyperparams}). Each policy $\pi_i$ is initialized to uniform unless otherwise specified. All experiments except \emph{pathfinding} were run on a single CPU and take about a minute to solve although exact exploitability reporting via \textsc{cvxopt}~\citep{diamond2016cvxpy} increases runtime approximately $10\times$; pathfinding used $1$ GPU.
\begin{figure*}[ht]
    \centering
    \includegraphics[width=\textwidth]{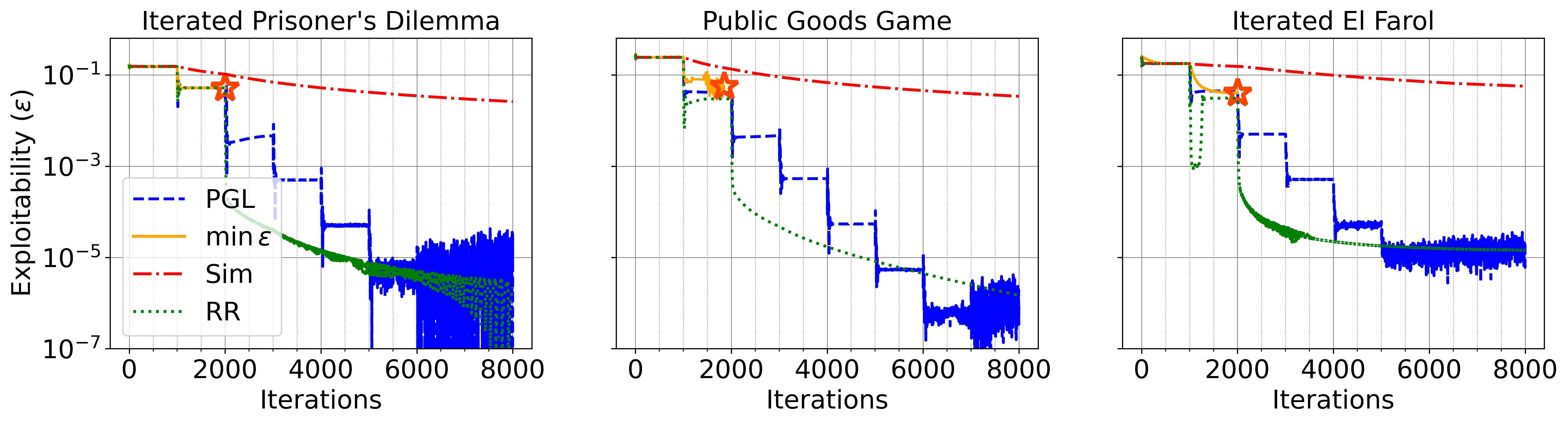}
    \caption{(Convergence to Nash equilibria) \texttt{RR} and \pgl{} descent yield the lowest exploitability yet converge to different equilibria. \pgl{} drops coincide with temperature annealing. $\min \epsilon$ crashes, marked by stars. All algorithms performed best with a learning rate of $0.1$}
    \label{fig:ne_conv}
\end{figure*}

\paragraph{Domains.}
We consider seven domains: one synthetic, two grid worlds, and four iterated normal-form games (NFGs). The first is a \ma{} pathfinding problem. The second domain is the classic two-player, iterated prisoner's dilemma (compare \citet{tucker1983mathematics}) where agents may choose to cooperate or defect with their partner. The third domain is a three-player, public goods game where agents may choose to contribute all or none of their savings to a public pool which is then redistributed evenly with a growth multiplier of $1.3$ (compare \citet{janssen2003adaptation}); payoff is measured in terms of player profits. In the fourth, we consider a three-player El Farol bar problem where players choose whether to go to a bar or stay home \citep{arthur1994complexity}. Agents receive maximum payoff ($2$) for attending an uncrowded bar (fewer than $3$ people), followed by staying home ($1$), followed by attending a crowded bar ($0$). The fifth domain is the classic Bach-Stravinsky game where agents must coordinate to attend a performance despite different preferences. The synthetic domain was presented in Figure~\ref{fig:nonconvex_brp} and is described in Appendix~\ref{app:games:gridworlds}. Lastly, we consider a robotic warehouse inspired grid world. We set $\gamma=0.99$ in all domains. Iterated NFGs use the last joint action selected by each player as state, i.e., $\mathcal{S} = \mathcal{A}$.

\subsection{Creativity}
Our first application considers utilities that value solutions that cover more than a small subset of the state-action space, which leads to \enquote{creative} equilibria, as we show. We get such outcomes for utilities that incorporate entropy bonuses for high (Shannon) entropy occupancy measures, $u_i(\mu_i, \pi_{-i}) = r_i(\pi_{-i})^\top \mu_i + \tau H(\mu_i)$ where $r_i(\pi_{-i}) = \mathbb{E}_{a_{-i} \sim \pi_{-i}}[r_i(s, a_i, a_{-i})]$ as an abuse of notation.

To find equilibria of the original game, we anneal the weight on this entropy bonus towards zero, emulating prior work on \emph{homotopy} methods for equilibria in Markov games~\citep{eibelshauser2019markov}.

First, we consider a \ma{} pathfinding problem.
Two agents must coordinate to pass through a bottleneck doorway on their way to a joint goal state. The reward for reaching the goal state is $100$ for both players; $-0.01$ reward otherwise. Upon reaching the goal state, the agents are reset to the start state (leftmost grid in Figure~\ref{fig:pathfinding}). Our algorithm returned an approximate equilibrium where the final utility for each agent was $24.5$ and the exploitability was $1.7$ (approximately $7\%$ of their utility).

A single rollout of the final learned policy is shown in Figure~\ref{fig:pathfinding}. The agents race to cross the doorway, after which, one agent takes the center position and the other steps aside. In the third and fourth to last frames, the blue agent moves downward due to the small remaining entropy bonus. Both agents then move upward towards the goal state. In the final frame, green executes a ``no-op'' action to the right as blue moves into goal position. Despite learning a factorized Nash equilibrium policy profile, the agents exhibit coordinated actions at certain steps (e.g., Figure~\ref{fig:pathfinding}, frame 3). This coordination is achieved through observations of partner players' grid locations, but richer coordination is theoretically possible with richer observation spaces.

Next, we examine three iterated NFGs. Figure~\ref{fig:ne_conv} shows our algorithm has vanishing exploitability for all of them. In each game, directly minimizing exploitability in \textsc{cvxpylayers} crashes due to numerical instabilities.
Round-robin descent exhibits similar qualitative behavior to our method, and faster convergence than simultaneous descent.

For a study of creativity, it is valuable to inspect the equilibria the methods found. Round-robin descent converges to an asymmetric NE in iterated El Farol where one player goes to the bar every night, while the two other players alternate. Like \pgl{}, \textsc{sgamesolver} converges to the symmetric, state-independent NE of the underlying NFG (attending the bar with probability $0.707$). In IPD and IPGG, round-robin and \textsc{sgamesolver} similarly converge to the state-independent NE policies of the underlying NFGs (DD for IPD, zero contribution for IPGG). In contrast, our approach reveals more nuanced, symmetric policies in IPD and IPGG which we explain and discuss below.

\begin{table}
    \centering
    \begin{tabular}{ccc}
    \toprule
        State $s = (a_1^{t-1}, a_2^{t-1})$ & $a^*$ & $\pi_1(a^* \vert s)$ \\ \midrule
        (C,  C) & C & 0.73  \\
        (C,  D) & D & 0.74 \\
        (D,  C) & C & 0.66 \\
        (D,  D) & D & 0.81 \\
        \bottomrule\\
    \end{tabular}
    \caption{(Approximate \pgl{} Equilibrium on IPD) The best response is denoted $a^* = \argmax_a \pi_1(a \vert s)$; $C$ denotes Cooperate; $D$ denotes Defect. The utility per-player is $0.47$ compared to that of other classic IPD strategies: tit-for-tat ($0.5$), win-stay, lose-shift ($0.67$), grim trigger ($0.42$), defect-defect ($0.33$).}
    \label{tab:ipd_eq}
\end{table}

\textbf{Remark ($\dagger$)}: The entropy of a player's occupancy measure is different from the entropy of their policy; the latter only measures entropy of action distributions in each state, ignoring the distribution across states. Interestingly, this difference manifests in the structure of the equilibria we discover. At high entropy, agents must explore the entire state space, which includes joint cooperation in the iterated prisoner's dilemma (IPD) and joint donation in the iterated public goods game (IPGG). As temperature is annealed, the players receive less of a bonus for exploration, however, this transient introduction to mutually beneficial play has a lasting impact on equilibrium selection.

In IPD, \pgl{} finds a symmetric policy, shown in Table~\ref{tab:ipd_eq}.
If both players cooperated on the last round, then they are likely to continue cooperating. If player $2$ defected, then player $1$ is likely to defect (even more likely if $1$ defected rather than cooperated on the last round). If $1$ defected and $2$ cooperated, $1$ is actually more likely to cooperate in the next round, in an act of reciprocation.

In IPGG, \pgl{} finds a symmetric policy as well, shown in Table~\ref{tab:pgg_eq}.
In contrast to the zero contribution policies found by all other methods, \pgl{} finds a policy that, intuitively, is more likely to contribute funds when other agents do.

\subsection{Imitation}
Our second application considers utilities that value policies similar to policies observed in human experiments. In this experiment, we build on the homotopy experiment in the creativity section where we annealed our entropy coefficient $\tau$, but instead of annealing entropy, we anneal a KL penalty to the human state-action occupancy measure, $u_i(\mu_i, \pi_{-i}) = r_i(\pi_{-i})^\top \mu_i - \tau d_{\operatorname{KL}}(\mu_i \| \mu_i^{\operatorname{ref}})$.

\begin{table}
    \centering
    \begin{tabular}{ccc}
    \toprule
        State $s = (a_1^{t-1}, a_2^{t-1}, a_3^{t-1})$ & $a^*$ & $\pi_1(a^* \vert s)$ \\ \midrule
        (None, None, None) & None & 0.99  \\ 
        (None, None, All-In) & None & 0.66 \\ 
        (None, All-In, None) & None & 0.66 \\ 
        (None, All-In, All-In) & All-In & 0.60 \\ 
        (All-In, None, None) & None & 0.80  \\ 
        (All-In, None, All-In) & All-In & 0.56 \\ 
        (All-In, All-In, None) & All-In & 0.56 \\ 
        (All-In, All-In, All-In) & All-In & 0.86\\\bottomrule\\
    \end{tabular}
    \caption{(Approximate \pgl{} Equilibrium on IPGG) The best response is denoted $a^* = \argmax_a \pi_1(a \vert s)$. Per-player utility is $0.03$ versus $0$ for zero contribution policies.}
    \label{tab:pgg_eq}
\end{table}

\textbf{Remark ($\dagger$)}: Matching occupancy measures is akin to matching long-run trajectories whereas matching policies does not necessarily match trajectories when other agents adjust their policies during learning as is the case here.

The reference human policies $\mu_i^{\operatorname{ref}}$ were derived from experiments where subjects played the iterated prisoner's dilemma, selecting cooperate (C) or defect (D) in each period~\citep[Table 1, Current, Direct-Response]{romero2023mixed}. Subjects were required to confirm their opponent's action after each period. This ensured that they were capable of representing a policy that conditions on the previous action. Table~\ref{tab:ipd_human_eq} reports the symmetric policy we learned while regularizing to the human occupancy measure. Note that this new policy profile has slightly higher utility for all agents and is very close to an equilibrium, independent of the starting state.

After two scenarios where we used a sequence of cMGs to discover creative and human-like equilibria, we consider a setting in which the utilities are not annealed, but are constant across time.

\begin{table}
    \centering
    \begin{tabular}{cccc}
    \toprule
        State $s = (a_1^{t-1}, a_2^{t-1})$ & $a^*$ & $\pi_1(a^* \vert s)$ & $\pi_1^h(a^* \vert s)$ \\ \midrule
        (C,  C) & C & 0.83 & 0.86 \\ 
        (C,  D) & D & 0.52 & 0.65 \\ 
        (D,  C) & D & 0.53 & 0.55 \\ 
        (D,  D) & D & 0.86 & 0.87 \\\bottomrule\\
    \end{tabular}
    \caption{Approximate Nash equilibrium recovered by our algorithm in IPD after annealing KL regularization to the human policy reported in the rightmost column~\citep[Table 1, Current, Direct-Response]{romero2023mixed}. The best response is denoted $a^* = \argmax_a \pi_1(a \vert s)$; $C$ denotes Cooperate; $D$ denotes Defect. The utility per-player under our learned symmetric policy profile is $0.48$ versus $0.46$ for the human policy. In addition, our learned policy profile is $1.4\times10^{-4}$-exploitable at every state, whereas the human policy profile is substantially more exploitable, being $0.47$-exploitable over any initial state. Exploitability over all initial states may be seen as an analogue of Markov perfection in Markov games \citep{maskin2001markov}.}
    \label{tab:ipd_human_eq}
\end{table}

\subsection{Fairness}\label{exp:fairness}
We now consider utilities that value fair visitation of states. In the Bach-Stravinsky game, two players must choose whether to attend a performance by Bach or Stravinsky. If they misalign, they get zero reward. However, one player prefers Bach to Stravinsky ($3$ \textit{vs.} $2$), whereas the other player prefers Stravinsky to Bach ($3$ \textit{vs.} $2$). We incorporate a term into both player's objectives that penalizes any difference in long-run attendance of Bach versus Stravinsky to incentivize fair, equal attendance of the two shows, $u_i(\mu_i, \pi_{-i}) = r_i(\pi_{-i})^\top \mu_i  - (\sum_{a \in \{\operatorname{S}, \operatorname{B}\}}\mu_i((\operatorname{B}, \operatorname{B}), a) - \sum_{a \in \{\operatorname{S}, \operatorname{B}\}} \mu_i((\operatorname{S},\operatorname{S}), a))^2$.

We initialize logits for both players' policies with a standard normal. We set temperature $\tau$ to zero and then optimize with a learning rate of $0.1$ for $1000$ iterations.

In $10$ random trials, both players converge to the same approximate NE where they vote for their favored event $60\%$ of the time regardless of their actions on the previous day. The maximum exploitability $\epsilon$ over the $10$ trials is $2.5\times10^{-5}$, and the max difference between $\sum_{a } \mu_i((\operatorname{B}, \operatorname{B}), a)$ and $\sum_{a } \mu_i((\operatorname{S}, \operatorname{S}), a)$ is $2.14\times10^{-5}$, implying this is a \enquote{fair} behavioral profile by our fairness metric.

\subsection{Safety}\label{exp:safety}

Lastly, we explore applications where a ``safe'' long-run behavior of the \ma{} system is desired~\citep{miryoosefi2019reinforcement}.
Our algorithm is able to find an exact NE in the synthetic convex loss domain (i.e., $\epsilon = 0$) discussed in the motivating example of Figure~\ref{fig:nonconvex_brp}. Recall the loss for each agent is zero if their $\mu_i$ lies in the given ``safe'' region marked in black. There are no conventional ``rewards'' $r_i$ in this synthetic domain. See Appendix~\ref{app:games:gridworlds} for details.

In addition, we demonstrate a safety application with a grid world where two robots pick up and drop off packages in a warehouse. Packages can only be picked up at a central location where it is potentially dangerous for the robots to move quickly if they happen to share the pickup space simultaneously. At the same time, they are incentivized to drop off as many packages as possible.

Agents maximize their discounted return minus a convex safety loss, which penalizes them for the frequency they take the \emph{fast} action in the pickup state beyond $10\%$: $100 \cdot \max\big(0, \mu_i(s=(\text{pickup}, \text{pickup}), a=\text{fast}) - 0.10\big)$. In other words, any frequency below $10\%$ is deemed sufficiently safe, but beyond that a linear penalty is applied.

Our algorithm is able to find an approximate NE in this warehouse domain with and without the safety loss, achieving low exploitability, $\epsilon \le 3.4 \times 10^{-2}$ and $1.0 \times 10^{-3}$ respectively. In either domain, the learned policy always chooses the fast action in all states except for when both agents are picking up a package. When the safety loss is \textbf{not} included, the agents select to move fast $69\%$ of the time. \textbf{With} the safety loss, fast is chosen $42\%$ of the time reflecting the convex penalty for unsafe behavior.

\section{Related Work}\label{sec:litrev}
Our work relates to the single-agent literature on convex Markov games and equilibrium selection in Markov games. We rely on NE existence-proof techniques from topology, loss minimization for equilibrium computation, and use homotopy methods as inspiration for our experiments. Finally, we unify approaches to creativity, imitation, fairness, and safety from \ma{} learning.

\paragraph{Convex Markov Decision Processes}
Markov decision processes (MDPs) are the predominant framework for modeling sequential decision making problems, especially in infinite-horizon settings~\citep[\S6.9]{puterman2014markov}. The goal of a decision maker in an MDP is typically to maximize a $\gamma$-discounted sum of rewards earned throughout the sequential decision process. In the infinite-horizon setting, recent research has exploited an alternative, but equivalent view of maximizing the expected reward under the agent's stationary state-action occupancy measure~\citep{zhang2020variational}\textemdash the probability of being in a given state and taking a given action. This viewpoint reveals an optimization problem with a linear objective (maximize return) and linear constraints (valid occupancy measure); from this launchpad, research has generalized to convex objectives that incorporate, for example, the (neg)entropy of the occupancy measure in order to maximize exploration of the MDP~\citep{zahavy2021reward} or maximize robustness~\citep{grand2022convex}. Research on cMDPs has recently surged, however, formulations and solutions to \emph{nonstandard Markovian control problems} have a long history~\citep{kallenberg1994survey,takacs1966non}.

A number of works have explored solutions to MDPs with similarly complex objectives and in a variety of settings. For example, prior research has looked beyond concave utilities to multi-objective~\citep{cheung2019regret} or submodular objectives~\citep{prajapatsubmodular,de2024global}. \citet{mutti2022challenging} pointed out practical concerns with the infinite-trials assumption \emph{baked-in} to the standard convex MDP formulation, motivating the study of finite trials~\citep{mutti2023convex}. Others have considered the non-stationary~\citep{marinmoreno2024metacurl} and online~\citep{cheung2019exploration} settings as well. Designing scalable algorithms for convex MDPs is a challenge. \citet{zhang2020variational} proposed a model-free (variational) policy gradient approach for general utilities that was then combined with variance reduction techniques for improved performance in subsequent work~\cite{barakat2023reinforcement}. \citet{marinmoreno2024efficient} developed an efficient model-based RL approach for the finite horizon setting, and~\citet{geist2022concave} shed light on a connection to mean-field games, enabling the design of new algorithms for cMDPs. This mean-field games connection also inspired other work to extend inverse-RL to cMDPs where the aim is to uncover an agent's utility function from observed behavior~\citep{celikok2024inverse}.

\paragraph{Markov Games.}
When multiple agents' decision making problems interact, the MDP framework is extended to a Markov game, also known as a stochastic game~\citep{thuijsman1997survey,littman1994markov}. In the game setting, we seek equilibria, a notion of simultaneous optimality for all agents. \citet{fink1964equilibrium} proved the existence of stationary (time-independent) $\gamma$-discounted Nash equilibria in $n$-player, general-sum stochastic games. A homotopy approach that traces the continuum of quantal response equilibria performs well at approximating Nash equilibria in the limit of zero temperature~\citep{eibelshauser2019markov}. Other approaches are tailored for more restricted two-player, zero-sum settings~\citep{daskalakis2020independent,goktas2024convex} or settings where agent incentives are a priori aligned such as Markov potential games~\citep{leonardos2021global}. Prior work proved a negative result for value iteration based approaches stating it is not possible to derive a stationary equilibrium policy from Q-values in general Markov games~\citep{zinkevich2005cyclic}. Lastly, recent work extends the Markov game, a Markov chain of normal-form games, to a chain of \emph{abstract economies} (or \emph{pseudo-games})~\citep{goktas2025infinite}, games with jointly constrained strategy spaces. In contrast, our cMG can almost be seen as generalizing a Markov game to an abstract economy, however, this leads to an unnatural interpretation of unilateral deviations in terms of occupancy measures.

\paragraph{Techniques.}

Our proposed loss function extends that designed in recent work for normal-form games (NFGs)~\citep{gemp2023approximating} to the convex Markov game setting. In their work, the focus was on constructing a loss amenable to unbiased estimation. In NFGs, the feasible set $\mathcal{U}_i = \Delta^{\mathcal{A}_i}$ is fixed, independent of other players' strategies. This allows the construction of an unbiased estimator of their loss assuming access to unbiased gradients of player's utilities. Our loss applies to a more general class of games, but sacrifices unbiasedness.

Our applications anneal the temperature of an entropy bonus and/or a Kullback-Leibler divergence penalty. Such approaches relate to homotopy continuation-based approaches to equilibrium computation~\citep{harsanyi1988general}. \citet{mckelvey1995quantal} introduced quantal response equilibria along with a homotopy from infinite to zero temperature defining their limiting logit equilibrium in normal-form games (and also extensive-form~\citep{mckelvey1998quantal}). More recent work extended this approach to Markov games~\citep{eibelshauser2019markov}.

\paragraph{Applications.}
Our goals of creativity, imitation, fairness, and safety are not new to \ma{} applications. For example, \citet{zahavy2022discovering,zahavy2023diversifying} leveraged convex MDPs to discover more creative play in Chess. \citet{bakhtin2022mastering,jacob2022modeling} used KL-regularization towards human play to recover strategically superior policies in Diplomacy. \citet{hughes2018inequity} models inequity aversion in complex MARL domains. And \citet{shalev2016safe} forgoes the Markovian assumption altogether to tackle safe autonomous driving. \citet{zamboni2025towards} formulates a cMG with identical payoffs to specifically target group exploration in, for example, robotics domains. In contrast to these applications, our framework of convex Markov games allows for a unified analysis of several domains in a common language and using common algorithmic principles.

\section{Conclusion}\label{sec:conclusion}
Convex Markov Games are a versatile framework for \ma{} reinforcement learning. The cMG framework induces equilibria that exhibit diverse state visitation, emulate human data, optimize notions of fairness, or avoid unsafe system states. Not only do pure-strategy equilibria exist despite non-convex best response correspondences, there is also a differentiable upper bound that can be minimized to find them. In several domains, we show how deforming a cMG over training can help to pick out novel approximate equilibria with properties more desirable than those found by baseline techniques such as higher welfare, symmetry, reciprocation, and state-dependent behaviors.

Our work opens up several directions for future research. For instance, tackling theoretical questions concerning other possible solution concepts, e.g., suitable definitions of correlated or coarse-correlated equilibria as well as studying more specific game classes under more restrictive assumptions such as potential or identical payoff settings. In a similar vein, we have, in concurrent work, constructed an efficient, convergent, model-free approach to solving cMGs in the two-player, zero-sum setting~\citep{kalogiannis2025zerosumcmg}.

Under the $n$-player, general-sum setting we investigate in this work, we proposed a centralized approach that assumes knowledge of the transition dynamics. If dynamics are estimated, we pointed out issues obtaining unbiased estimates of the projection operator $\Pi_{T\mathcal{M}_i(\mu_{-i})}$. A model-free, decentralized training approach could scale to more complex domains and richer policies.

Several multi-agent applications have already paid attention to the entropy of agents' occupancy measures, particularly in robotics domains to enhance multi-agent \emph{exploration}~\citep{burgard2000collaborative,rogers2013coordination,tan2022deep,zamboni2025towards}. Unrelatedly, the tuning of large language models (LLMs) to align with human feedback has been recently formulated as a Markov game~\citep{wu2025multi}. We hope cMGs can help provide a framework from which to better understand these and other problems in the future.

\section*{Acknowledgements}
We are grateful to Conrad Kosowsky for helpful discussions on topological equilibrium existence topics.

\section*{Impact Statement}

This paper presents work whose goal is to advance the field of 
\ma{} reinforcement learning. There are many potential societal consequences 
of our work, none which we feel must be specifically highlighted here.

\bibliographystyle{icml2025}
\bibliography{main}

\newpage
\onecolumn
\appendix
\section{Equilibrium Proof}\label{app:eq_proof}

\setcounter{theorem}{2}
\begin{theorem}[Corollary 8,~\citet{kosowsky2023nash}]\label{theorem:conrad}
Suppose that ($\Pi_i$) is a finite collection of compact and connected manifolds, and suppose we have continuous utility functions $u_i: \Pi \rightarrow \mathbb{R}$ with $\Pi = \bigtimes_{i=1}^n \Pi_i$ satisfying the following properties for every player $i$:
\begin{enumerate}
    \item Player $i$’s best response is a continuous function $\br_i: \Pi_{-i} \rightarrow C(\Pi_i)$ where $C$ is the set of nonempty, closed subsets of $\Pi_i$ (equipped with the upper Vietoris topology), i.e., $\br_i$ is a compact-valued, upper-hemicontinuous correspondence;
    \item Every output value $\br_i(\pi_{-i})$ is contractible and has a contractible neighborhood in $\Pi_i$;
    \item There exists a homotopy that takes $\br_i$ to a constant map and whose output values \textemdash which are elements of $C(\Pi_i)$ \textemdash are all contractible and have a contractible neighborhood in $\Pi_i$.
\end{enumerate}
Then the game has a pure-strategy Nash equilibrium.
\end{theorem}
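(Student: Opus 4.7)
The plan is to reduce the statement to a Lefschetz-type fixed-point theorem for upper-hemicontinuous correspondences with contractible values on a compact connected manifold, and then use the homotopy hypothesis (3) to show that the Lefschetz number is nonzero. First I would assemble the joint best-response correspondence $\br\colon \Pi \to C(\Pi)$ by $\br(\pi) := \prod_i \br_i(\pi_{-i})$, where $\Pi = \bigtimes_{i=1}^n \Pi_i$; a pure-strategy Nash equilibrium is precisely a fixed point $\pi^\ast \in \br(\pi^\ast)$. All three hypotheses transfer cleanly to this joint correspondence: it is compact-valued and upper-hemicontinuous as a finite product of such; its value $\br(\pi)$ is a product of contractibles, hence contractible, and the product of contractible neighborhoods given by (2) is an open contractible neighborhood in $\Pi$; finally, the component-wise homotopies combine into a single homotopy from $\br$ to a constant correspondence whose intermediate values still have contractible values and contractible neighborhoods.

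Next I would set up the Vietoris--Begle machinery on the graph $\Gamma := \operatorname{Gr}(\br) \subset \Pi \times \Pi$. Because $\br$ is compact-valued and UHC into the compact $\Pi$, the graph $\Gamma$ is compact. Let $p\colon \Gamma \to \Pi$ and $q\colon \Gamma \to \Pi$ be projection onto the first and second factors, respectively. The fiber $p^{-1}(\pi) = \{\pi\} \times \br(\pi)$ is contractible, so the Vietoris--Begle theorem yields that $p^\ast\colon \check H^\ast(\Pi;\mathbb Q) \to \check H^\ast(\Gamma;\mathbb Q)$ is an isomorphism; the contractible-neighborhood hypothesis keeps us in the ANR regime where this theorem applies cleanly. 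Define the induced endomorphism $\br^\ast := (p^\ast)^{-1} \circ q^\ast$ on $H^\ast(\Pi;\mathbb Q)$ and the associated Lefschetz number $\Lambda(\br) := \sum_k (-1)^k \operatorname{tr}(\br^\ast\!\mid_{H^k(\Pi;\mathbb Q)})$.

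Then I would cash in hypothesis (3). The homotopy from $\br$ to a constant correspondence $\br_1 \equiv K$ (with $K \subset \Pi$ contractible) induces a homotopy of graphs $\Gamma_t$ through which each slice remains admissible in the above sense, so $\br^\ast$ is homotopy-invariant, giving $\Lambda(\br) = \Lambda(\br_1)$. For the constant $\br_1$, the induced action factors through $H^\ast(K;\mathbb Q) \cong H^\ast(\mathrm{pt};\mathbb Q)$, vanishing in positive degrees, so $\Lambda(\br_1) = \operatorname{tr}(\br_1^\ast\!\mid_{H^0}) = 1$ since $\Pi$, a product of compact connected manifolds, is connected. Hence $\Lambda(\br) = 1 \neq 0$, and the Eilenberg--Montgomery fixed-point theorem for admissible correspondences on compact ANRs (Górniewicz's monograph) delivers a fixed point, i.e., a pure-strategy NE.

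The hard part will be justifying that the admissibility framework and Vietoris--Begle step really apply. Neither pointwise contractibility of values nor compactness of $\Pi$ alone is enough; the Lefschetz theorem for correspondences requires $\Pi$ to be an ANR (fine for a compact manifold) and the fibers of $p$ to be acyclic in Čech cohomology in a \emph{neighborhood-stable} way, which is exactly what the contractible-neighborhood clauses of (2) and (3) provide. An equivalent route, which one might prefer in order to sidestep Čech cohomology, is to use the contractible-neighborhood condition to build continuous single-valued $\varepsilon$-approximations $f_\varepsilon\colon \Pi \to \Pi$ of $\br$ (Cellina-style), apply the classical Lefschetz theorem to each $f_\varepsilon$ using $\Lambda(f_\varepsilon) = \Lambda(\br) = 1$, and extract a fixed point of $\br$ as a limit point of the $f_\varepsilon$-fixed points via compactness and upper hemicontinuity. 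A secondary technicality is verifying that hypothesis (3) is preserved under the product construction, which reduces to the fact that products of contractibles (and of their contractible neighborhoods) are contractible.
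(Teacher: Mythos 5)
You should first note that the paper does not actually prove this statement: it is imported verbatim as Corollary~8 of \citet{kosowsky2023nash} and used as a black box in the proof of Theorem~\ref{theorem:ne_as_single_pi}, so there is no in-paper argument to compare against line by line. Judged on its own terms, your sketch is a reasonable reconstruction of how such a result is standardly obtained: form the joint correspondence $\br(\pi)=\prod_i \br_i(\pi_{-i})$ (whose fixed points are exactly pure Nash equilibria, and which inherits compact values, upper hemicontinuity, contractible values, contractible neighborhoods, and null-homotopy from the factors), define a Lefschetz number via the Vietoris--Begle isomorphism on the graph, use hypothesis~(3) to deform to a constant correspondence and conclude $\Lambda(\br)=1$, and invoke an Eilenberg--Montgomery/G\'orniewicz-type fixed-point theorem. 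The reduction step and the computation $\Lambda(\br_1)=1$ are correct.

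Two caveats keep this from being a complete proof. First, your argument delegates essentially all of the topological content to the Lefschetz fixed-point theorem for admissible (acyclic-valued, UHC) correspondences, together with homotopy invariance of the induced map in \v{C}ech cohomology along a homotopy of correspondences; that package is at least as deep as the theorem being proved, so as written this is a reduction rather than a proof. Second, and more concretely, your proposed ``elementary'' alternative via single-valued $\varepsilon$-approximations does not go through as stated: Cellina-style approximation theorems require \emph{convex} values, and for merely contractible values one needs the $UV$-approximation machinery, in which the contractible-neighborhood clauses of hypotheses~(2) and~(3) are not a convenience but the essential input (one must contract values \emph{inside} neighborhoods of themselves, uniformly over the domain, to build the approximating maps and to show any two of them are homotopic). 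You gesture at this but do not supply it, and it is precisely the step where a naive argument would fail --- which is also why the hypotheses of the theorem are phrased in terms of neighborhoods rather than values alone.
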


\setcounter{theorem}{0}
\begin{theorem}\label{theorem:ne_as_single_pi}
Pure-strategy Nash equilibria exist in convex Markov Games.
\end{theorem}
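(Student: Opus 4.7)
The plan is to invoke \autoref{theorem:conrad} on the players' policy sets $\Pi_i := \bigtimes_{s \in \mathcal{S}} \Delta^{\mathcal{A}_i}$, products of per-state action simplices, and verify its three hypotheses in turn. The easy conditions are the regularity ones. Each $\Pi_i$ is compact, connected, and a (manifold-with-corners) object of the kind Kosowsky's framework accommodates. Continuity of $u_i(\pi_i, \pi_{-i})$ in the joint policy follows from continuity of $\pi \mapsto \mu(\pi)$ via~\peqref{eqn:pi_to_occ_i}: $I - \gamma P^\pi$ is always invertible because $\gamma P^\pi$ has spectral radius at most $\gamma < 1$, and continuity of $u_i$ in its occupancy-measure argument is assumed. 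Upper-hemicontinuity and compact-valuedness of $\br_i \colon \Pi_{-i} \to C(\Pi_i)$ then follow from Berge's maximum theorem applied to the continuous family of constrained programs~\peqref{eqn:occ_program}.

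Second, and this is where the main work lies, I would establish contractibility of each best-response set in policy space. The natural first move is to pass to the occupancy-measure view: with $\pi_{-i}$ fixed, the best-response occupancy set $\mathcal{B}_i^\mu(\pi_{-i}) := \argmax_{\mu_i \in \mathcal{M}_i(\pi_{-i})} u_i(\mu_i, \pi_{-i})$ is the solution set of a convex program with concave objective and linear constraints, hence convex (and thus contractible). The best-response set in policy space, $\br_i(\pi_{-i})$, is the preimage of $\mathcal{B}_i^\mu(\pi_{-i})$ under $\pi_i \mapsto \mu_i(\pi_i, \pi_{-i})$; it fails to be convex precisely because the inverse map~\peqref{eqn:pi_from_occ} is multivalued at states $s$ never visited, as visualized in Figure~\ref{fig:nonconvex_brp}\subref{fig:nonconvex_brp:policy}. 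I would construct an explicit deformation retract of $\br_i(\pi_{-i})$ onto a chosen base policy $\bar\pi_i$: for $\lambda \in [0,1]$, first slide the induced occupancy along the line segment from $\mu_i(\pi_i, \pi_{-i})$ to $\mu_i(\bar\pi_i, \pi_{-i})$ inside the convex set $\mathcal{B}_i^\mu(\pi_{-i})$, recovering a policy on every state with positive occupancy via~\peqref{eqn:pi_from_occ}; second, at states $s$ whose occupancy remains zero along the segment, linearly interpolate the free distribution $\pi_i(\cdot \mid s)$ toward $\bar\pi_i(\cdot \mid s)$ within $\Delta^{\mathcal{A}_i}$. Continuity and stay-in-set of this homotopy need to be checked carefully, and local contractibility of a neighborhood in $\Pi_i$ follows by applying the same recipe to nearby policies.

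Third, for the homotopy taking $\br_i$ to a constant correspondence, I would compose two deformations. The first anneals an entropy term onto $u_i$, replacing it by $u_i^\tau(\mu_i, \pi_{-i}) = u_i(\mu_i, \pi_{-i}) + \tau H(\mu_i)$ with $\tau$ sweeping from $0$ upwards; for $\tau > 0$ strict concavity makes the best-response occupancy unique, so the corresponding best-response set in policy space is a product of simplices over unvisited states alone, which is convex and hence contractible. The second deformation linearly interpolates, inside the convex $\Pi_i$, from the single-valued limit map $\pi_{-i} \mapsto \pi_i^{\tau=\infty}(\pi_{-i})$ to a fixed constant policy $\pi_i^0$. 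Every intermediate correspondence in the concatenation is either single-valued or convex-valued, so contractibility and its local version are preserved throughout.

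The hard part is step two: ensuring the explicit retraction depends continuously on $\pi_i$ (for contractibility of the fiber) and jointly on $\pi_{-i}$ (for the neighborhood clause and for the homotopy in step three to remain continuous in its domain argument). The subtlety concentrates at the boundary between states that are visited and states that are unvisited, where the piecewise structure of~\peqref{eqn:pi_from_occ} switches branches; handling this carefully is the crux of the proof. Once these regularity issues are resolved, \autoref{theorem:conrad} applies directly and yields a pure-strategy Nash equilibrium.
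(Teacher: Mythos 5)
Your overall route is the same as the paper's: invoke Kosowsky's Corollary~8 on the simplex-products $\Pi_i$, obtain compact values and upper hemicontinuity of $\br_i$ from Berge's maximum theorem applied to the occupancy-space program~\peqref{eqn:occ_program}, and derive contractibility from convexity of the best-response set in occupancy space. Two places diverge. First, for contractibility of $\br_i(\pi_{-i})$ in policy space the paper does not build an explicit retraction; it notes that the occupancy-to-policy correspondence~\peqref{eqn:pi_from_occ} is upper hemicontinuous and asserts that the convex occupancy-space solution set remains contractible after mapping. Your explicit deformation retract, as written, is discontinuous at $\lambda=0$: if a state $s$ has zero occupancy under $\pi_i$ but positive occupancy under the base policy $\bar\pi_i$, then along the segment $\mu^\lambda=(1-\lambda)\mu^0+\lambda\mu^1$ the recovered conditional at $s$ equals $\lambda\mu^1(s,\cdot)/(\lambda\,\mu^{s,1}(s))=\bar\pi_i(\cdot\mid s)$ for every $\lambda>0$, yet must equal the free value $\pi_i(\cdot\mid s)$ at $\lambda=0$. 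You correctly flag this as the crux but do not resolve it, so this step is a genuine gap in your write-up. It is repairable (e.g., first retract the free coordinates at states unreachable under $\mu^0$, then slide the occupancy), but the repair must be checked for joint continuity in $(\pi_i,\lambda)$ and in $\pi_{-i}$, which is exactly the part you leave open.

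Second, your homotopy to a constant map is genuinely different from, and heavier than, the paper's. The paper simply sets $H(t,\pi_{-i})=\br_i\bigl((1-t)\pi_{-i}+t\pi_{-i}^{0}\bigr)$ for a fixed profile $\pi_{-i}^{0}$: every intermediate value is itself a best-response set, so properties 1 and 2 transfer verbatim and no new contractibility argument is needed. Your two-stage deformation (anneal an entropy weight $\tau$ to force uniqueness in occupancy space, then linearly interpolate to a constant policy) is plausible but does extra work and inherits the same visited/unvisited boundary issues at $\tau=0$. Finally, the paper also records a second, shorter proof via Debreu's theorem, which needs only continuity of the best-response value (Berge again) and contractibility of the best-response sets, and thereby avoids the third hypothesis altogether; you may find that route cleaner once the contractibility step is nailed down.
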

\begin{proof}
This proof uses the Nash equilibrium existence result in Corollary 8 by~\citet{kosowsky2023nash} (restated as Theorem~\ref{theorem:conrad} above). For a slightly simpler proof that follows~\citet{debreu1952social}, see Theorem~\ref{theorem:ne_as_single_pi_alt} below.

We will first confirm the premise of Theorem~\ref{theorem:conrad} holds for convex Markov games. In cMGs, each player's strategy set $\Pi_i$ is the simplex-product, $\bigtimes_{s \in \mathcal{S}} \Delta^{\vert \mathcal{A}_i \vert - 1}$, a compact, convex set. All convex sets are connected. The joint strategy set is simply the product space of the player's individual strategy sets matching Theorem~\ref{theorem:conrad}: $\Pi = \bigtimes_{i=1}^n \Pi_i$. In addition, definition~\ref{def:cmg} of cMGs assumes each $u_i$ is continuous in players' occupancy measures $\mu_j$ for all $j$. Lemma~\ref{lemma:occ_from_pi_diff} proves each $\mu_j$ is a differentiable (hence continuous) function of $\pi = (\pi_1, \ldots, \pi_n)$, therefore continuity of $u_i$ in occupancy measures implies continuity in policies.

We will next prove each property required by Theorem~\ref{theorem:conrad} in order.

Player $i$'s feasible set in policy space is the simplex-product $\Pi_i$ (regardless of $\pi_{-i}$). Player $i$'s occupancy measure $\mu_i(\pi_i, \pi_{-i})$ is continuous in $\pi_{-i}$, hence the feasible set $M_i(\pi_{-i})$ in occupancy space (i.e., the image of $\mu_i(\pi_i, \pi_{-i})$ under $\Pi_i$ for each $\pi_{-i}$) is continuous in $\pi_{-i}$. The set $M_i(\pi_{-i})$ is also always non-empty (any policy is always feasible and we can map any policy to an occupancy measure) and compact (it is the intersection of a hyperplane with the simplex). Recall player $i$'s utility function $u_i$ is continuous in $\mu_i$. By Berge's maximum theorem~\citep[Theorem 17.31]{aliprantis2006}, the best-response occupancy set for player $i$ is upper hemicontinuous in $\pi_{-i}$ with non-empty and compact values. The mapping from occupancies to policies is upper hemicontinuous. The composition of upper hemicontinuous maps remains upper hemicontinuous~\citep[Theorem 17.23]{aliprantis2006}. Therefore, $\br_i$ is a compact-valued, upper-hemicontinuous correspondence satisfying property $1$.

Moving to the next property, every player's best (occupancy measure) response problem is a convex optimization problem, whose solutions always form a convex set. Convex sets are contractible (this is where concave utilities play the most critical role in the proof). The mapping from occupancies to policies is upper hemicontinuous. The set of best (policy) responses therefore remains contractible after mapping. Lastly, all of $\Pi_i$ can serve as a contractible neighborhood for each $\br_i$, confirming property 2.

To satisfy the last property, define $H(t,\pi_{-i}) = \br_i( (1-t) \pi_{-i} + t \pi_{-i}^{0})$, satisfying the endpoint conditions where $\pi^0$ is any valid strategy profile, e.g., uniform. $\Pi_{-i}$ is convex so every linear interpolation between $\pi_{-i}$ and $\pi_{-i}^0$ is in $\Pi_{-i}$. Recall, $\br_i$ is upper hemicontinuous. The interpolated strategy is obviously upper hemicontinuous in both $t$ and $\pi_{-i}$. Therefore, their composition forming $H$ proves $H$ is upper hemicontinuous in $t$ and $\pi_{-i}$. Recall that every best response set is contractible with a contractible neighborhood. Every set returned by $H(t,\pi_{-i})$ is a $\br_i(\pi_{-i}')$ for some $\pi_{-i}' \in \Pi_{-i}$. Hence $H(t,\pi_{-i})$ is contractible with a contractible neighborhood for every $t$ and $\pi_{-i}$ completing the claim.
\end{proof}

\setcounter{theorem}{3}
\begin{theorem}[\citet{debreu1952social}]\label{theorem:debreu}
Suppose that ($\Pi_i$) is a finite collection of contractible polyhedra, and suppose player $i$'s choice of action $\pi_i \in \Pi_i$ is further restricted to a non-empty, compact subset $\Pi'_i(\pi_{-i}) \subseteq \Pi_i$. Also denote $\Pi = \bigtimes_{i=1}^n \Pi_i$ and suppose we have utility functions $u_i: \Pi \rightarrow \mathbb{R}$ continuous on $\Pi'_i(\pi_{-i})$ for all $\pi_{-i}$ satisfying the following properties for every player $i$:
\begin{enumerate}
    \item Player $i$’s utility at their best response, $\max_{\pi_i \in \Pi'_i(\pi_{-i})} u_i(\pi_i, \pi_{-i})$ is a continuous function;
    \item Every output value $\br_i(\pi_{-i})$ is contractible.
\end{enumerate}
Then the game has a pure-strategy Nash equilibrium.
\end{theorem}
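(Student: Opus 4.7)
The plan is to deduce the theorem from a topological fixed-point theorem for contractible-valued upper hemicontinuous correspondences, namely the Eilenberg--Montgomery fixed-point theorem (a generalization of Kakutani's that replaces convex values by acyclic/contractible values, at the cost of restricting the domain to a polyhedron or ANR). The target correspondence will be the joint best-response
\[
\br(\pi) \;=\; \bigtimes_{i=1}^n \br_i(\pi_{-i}),
\]
viewed as a map from $\Pi = \bigtimes_i \Pi_i$ to subsets of $\Pi$. By the usual definition of Nash equilibrium under restricted action sets $\Pi_i'(\pi_{-i})$, any fixed point $\pi^\ast \in \br(\pi^\ast)$ is a pure-strategy Nash equilibrium, so the whole proof reduces to producing a fixed point.

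First I would check the structural hypotheses of the Eilenberg--Montgomery theorem for $\br$. The domain $\Pi$ is a finite Cartesian product of contractible polyhedra, hence itself a contractible polyhedron (contractibility is preserved under products by concatenating the given contractions coordinate-wise). Each $\br_i(\pi_{-i})$ is nonempty by continuity of $u_i(\cdot,\pi_{-i})$ on the nonempty compact set $\Pi_i'(\pi_{-i})$, and contractible by hypothesis~2; the product $\br(\pi)$ is then nonempty and contractible (product of contractible sets). In particular $\br(\pi)$ is acyclic in the sense required by the fixed-point theorem.

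Second I would establish upper hemicontinuity of $\br$, which is where most of the real work sits. Fix $i$ and take $\pi_{-i}^k \to \pi_{-i}$ with $\tilde\pi_i^k \in \br_i(\pi_{-i}^k)$ and $\tilde\pi_i^k \to \tilde\pi_i$. Closedness of the graph of $\Pi_i'$ (implicit in having nonempty compact images together with continuity of the value function, and a standard consequence in Debreu-style abstract-economy setups) gives $\tilde\pi_i \in \Pi_i'(\pi_{-i})$. Hypothesis~1, continuity of the value function $v_i(\pi_{-i}) := \max_{\pi_i \in \Pi_i'(\pi_{-i})} u_i(\pi_i,\pi_{-i})$, combined with joint continuity of $u_i$, yields
\[
u_i(\tilde\pi_i,\pi_{-i}) \;=\; \lim_k u_i(\tilde\pi_i^k,\pi_{-i}^k) \;=\; \lim_k v_i(\pi_{-i}^k) \;=\; v_i(\pi_{-i}),
\]
so $\tilde\pi_i$ is a best response and $\br_i$ is closed-graphed; with compact image this is equivalent to upper hemicontinuity. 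Upper hemicontinuity of the product $\br$ then follows componentwise.

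Finally I would invoke the Eilenberg--Montgomery fixed-point theorem: any upper hemicontinuous correspondence from a compact, contractible polyhedron into itself whose values are nonempty and contractible admits a fixed point $\pi^\ast \in \br(\pi^\ast)$. Unwinding definitions, $\pi^\ast_i \in \br_i(\pi^\ast_{-i})$ for every $i$, i.e.\ $\pi^\ast$ is a pure-strategy Nash equilibrium. The main obstacle I anticipate is the upper hemicontinuity step: the hypotheses as stated do not explicitly require the constraint correspondence $\Pi_i'$ to be continuous (both upper and lower hemicontinuous as Berge's maximum theorem normally demands), so one must squeeze upper hemicontinuity of $\br_i$ out of the given continuity of the value function together with closedness of the graph of $\Pi_i'$. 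A secondary subtlety is justifying the use of Eilenberg--Montgomery rather than Kakutani, since contractible values need not be convex; this is the whole reason Debreu's theorem is stronger than Nash's original argument and is exactly the reason it applies to cMGs where best-response sets fail to be convex in policy space.
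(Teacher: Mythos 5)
The paper does not actually prove this statement: it is imported verbatim as Debreu's social equilibrium existence theorem, cited to \citet{debreu1952social}, and used as a black box to derive Theorem~\ref{theorem:ne_as_single_pi_alt}. So there is no in-paper proof to match against; what you have written is, in effect, a reconstruction of Debreu's original argument, and it is essentially the right one. Debreu's 1952 proof does exactly what you propose: form the joint best-response correspondence $\br(\pi)=\bigtimes_i \br_i(\pi_{-i})$ on the contractible polyhedron $\Pi$, verify it is upper hemicontinuous with nonempty compact contractible (hence acyclic) values, and apply the Eilenberg--Montgomery fixed-point theorem, whose fixed points are precisely the pure equilibria. Your identification of the Kakutani-versus-Eilenberg--Montgomery distinction as the substantive content is also exactly why the paper reaches for this theorem: best-response sets in cMGs are contractible but not convex in policy space. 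Two caveats are worth recording. First, as you correctly flag, the theorem as restated in the paper omits the hypothesis Debreu actually needs for upper hemicontinuity of $\br_i$, namely closedness (indeed continuity) of the graph of the constraint correspondence $\Pi_i'$; your argument supplies it as an implicit assumption, and in the paper's application it is vacuous since $\Pi_i'(\pi_{-i})\equiv\Pi_i$ is constant. Second, your chain $u_i(\tilde\pi_i,\pi_{-i})=\lim_k u_i(\tilde\pi_i^k,\pi_{-i}^k)$ uses \emph{joint} continuity of $u_i$, which is stronger than the per-$\pi_{-i}$ continuity literally written in the statement but is what Debreu assumes (continuity on the graph of the constraint correspondence) and what holds in the paper's setting by Lemma~\ref{lemma:occ_from_pi_diff}. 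With those hypotheses read as Debreu intended them, your proof is correct.
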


\begin{theorem}\label{theorem:ne_as_single_pi_alt}
Pure-strategy Nash equilibria exist in convex Markov Games.
\end{theorem}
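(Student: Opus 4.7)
The plan is to apply Debreu's theorem (Theorem~\ref{theorem:debreu}) to the cMG viewed in policy space~\peqref{eqn:loss_i:pi}. Each strategy set $\Pi_i = \bigtimes_{s \in \mathcal{S}} \Delta^{\mathcal{A}_i}$ is a product of simplices, hence a compact convex polyhedron and trivially contractible. cMGs place no opponent-dependent constraints on an agent's policy, so I set $\Pi'_i(\pi_{-i}) = \Pi_i$, a constant, non-empty, compact-valued correspondence. Continuity of each $u_i$ on $\Pi$ follows by composing its continuity in $(\mu_i, \pi_{-i})$ (Definition~\ref{def:cmg}) with the continuity of $(\pi_i, \pi_{-i}) \mapsto \mu_i(\pi_i, \pi_{-i})$ established in Appendix~\ref{app:occ_from_pi}. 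Property~1 of Debreu's theorem---continuity of the best-response value function---then follows immediately from Berge's maximum theorem, because the feasibility correspondence is trivially continuous and the payoff is jointly continuous.

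The substantive task is Property~2: verifying that each best-response set $\br_i(\pi_{-i})$ is contractible in policy space. The starting point is that the per-player problem~\peqref{eqn:occ_program} is a concave maximization over a convex polytope, so its optimal set $B_i^\mu(\pi_{-i}) := \argmax_{\mu_i \in \mathcal{M}_i(\pi_{-i})} u_i(\mu_i, \pi_{-i})$ is convex and hence contractible. The policy-space best-response set is the preimage of $B_i^\mu(\pi_{-i})$ under the continuous map~\peqref{eqn:pi_to_occ_i}. Although preimages of convex sets under continuous maps need not be contractible in general, the cMG has additional fiber structure: by the Bellman-flow fixed point~\peqref{eqn:bellman_flow_i}, altering $\pi_i(\cdot|s)$ at any state $s$ with zero state-marginal leaves the induced occupancy unchanged, so each fiber is itself a product of simplices over the unvisited-state set---a convex, hence contractible, slice.

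With this structure in hand, I would fix a distinguished best response $\pi_i^\star$ with occupancy $\mu_i^\star$ and build a two-stage homotopy: first linearly interpolate the unvisited-state action distributions of any $\pi_i \in \br_i(\pi_{-i})$ toward those of $\pi_i^\star$ (staying inside a fiber and hence inside $\br_i(\pi_{-i})$), and then interpolate the induced occupancy inside $B_i^\mu(\pi_{-i})$ from $\mu_i(\pi_i,\pi_{-i})$ to $\mu_i^\star$, recovering policies via the selection rule~\peqref{eqn:pi_from_occ} with $\pi_i^\star$ as the default at states whose visitation transiently vanishes. Convexity of $B_i^\mu(\pi_{-i})$ keeps the intermediate occupancies optimal throughout. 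This certifies contractibility of $\br_i(\pi_{-i})$, and Debreu's theorem then produces a pure-strategy Nash equilibrium.

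The main obstacle is verifying joint continuity of the second-stage homotopy in $(t,\pi_i)$, because the selection rule~\peqref{eqn:pi_from_occ} is quotient-discontinuous wherever state occupancy vanishes. I expect the resolution to hinge on the same Bellman-flow identity that gave the fiber structure: under small perturbations, any state whose marginal approaches zero is also one where the occupancy assigns vanishing weight to every action, so the $\pi_i^\star$-default glues continuously with the quotient-defined policy in the limit. Dealing with this gluing cleanly---or, alternatively, finding a simpler homotopy that stays inside policy space and avoids the selection rule entirely---is the step I expect to consume the bulk of the proof.
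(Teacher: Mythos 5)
Your skeleton is exactly the paper's: invoke Debreu's theorem with $\Pi'_i(\pi_{-i}) = \Pi_i$, obtain continuity of the best-response value from Lemma~\ref{lemma:occ_from_pi_diff} and Berge's maximum theorem, and reduce everything to contractibility of $\br_i(\pi_{-i})$ in policy space. Where you diverge is that you try to actually build the contraction, and the gap you flag at the end is genuine and is not closed by the resolution you sketch. The normalized policy $\mu_i(s,\cdot)/\sum_{a'}\mu_i(s,a')$ does \emph{not} glue continuously with a default $\pi_i^\star(\cdot\vert s)$ as the state marginal vanishes: numerator and denominator both tend to zero, but their ratio can converge to any point of $\Delta^{\vert\mathcal{A}_i\vert}$ depending on the path, so the observation that ``the occupancy assigns vanishing weight to every action'' controls the occupancy, not the quotient. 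Concretely, at a state $s$ with $\mu^s(s;\pi_i)=0$ but $\mu^{s\star}(s)>0$, your second stage must return $\pi_i(\cdot\vert s)$ at $t=0$, yet for every $t>0$ the mixed occupancy visits $s$ and the recovered action distribution is already $\pi_i^\star(\cdot\vert s)$ (up to the vanishing $\pi_i$-contribution), so $H(\pi_i,\cdot)$ jumps at $t=0$ unless $\pi_i(\cdot\vert s)=\pi_i^\star(\cdot\vert s)$. Your first stage has the same defect in the $\pi_i$ variable: the unvisited set $Z(\pi_i)=\{s:\mu^s(s;\pi_i)=0\}$ can shrink discontinuously under perturbations of $\pi_i$, so a map that interpolates only on $Z(\pi_i)$ jumps whenever a state enters the support.

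For comparison, the paper does not construct a homotopy at all: it notes the optimal occupancy set is convex, hence contractible, and then asserts that contractibility survives the passage to policies because the occupancy-to-policy correspondence is upper hemicontinuous. Upper hemicontinuity alone does not preserve contractibility of images, so you have in effect surfaced the one step that both your write-up and the paper's proof leave without a complete argument. If you want to close it, the fiber structure you correctly identified (the preimage of each occupancy is a product of simplices over the unvisited states, hence convex) is the right ingredient, but it needs to be combined with something stronger than a naive two-stage interpolation --- e.g., a Vietoris--Begle/cell-like-map argument exploiting that $\br_i(\pi_{-i})$ maps onto the convex occupancy-space best-response set with contractible fibers, or a homotopy whose time parameter is slowed near states whose visitation switches on. Everything before that point (Debreu's hypotheses, $\Pi'_i = \Pi_i$, Berge, convexity of the occupancy-space best-response set) matches the paper and is correct.
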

\begin{proof}
Each $\Pi_i$ is a simplex product, hence a contractible polyhedron. In cMGs, $\Pi_i'(\pi_{-i}) = \Pi_i$ for all $\pi_{-i}$, so Theorem~\ref{theorem:debreu} is actually more general than what is needed here. Each $u_i$ is continous in $\pi_i$ by Lemma~\ref{lemma:occ_from_pi_diff}. By Berge's maximum theorem~\citep[Theorem 17.31]{aliprantis2006}, player $i$'s utility at their best response is a continuous function. Lastly, each $\br_i$ is contractible as already proven for property 2 in Theorem~\ref{theorem:ne_as_single_pi}.
\end{proof}

\section{Occupancy from Policy is Differentiable}\label{app:occ_from_pi}

\begin{lemma}\label{lemma:bellman_flow_full_rowrank}
The Bellman flow constraint matrix has full row-rank ($\vert \mathcal{S} \vert$) and is fixed independent of other player policies.
\end{lemma}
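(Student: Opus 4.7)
The plan is to show two things: (i) the row dimension of $A(\pi_{-i})$ is $\vert\mathcal{S}\vert$ regardless of $\pi_{-i}$, so the shape of the constraint matrix is fixed independent of opponents' strategies, and (ii) those $\vert\mathcal{S}\vert$ rows are linearly independent for every $\pi_{-i}$, hence the matrix has full row rank uniformly in $\pi_{-i}$. The shape statement is immediate from~\peqref{eqn:bellman_flow_i}: the constraint produces one equality per state $s'\in\mathcal{S}$, and the coefficient of $\mu_i(s,a_i)$ in that equation is $\delta_{s,s'} - \gamma P_i^{\pi_{-i}}(s'\mid s, a_i)$. So $A(\pi_{-i})$ always has $\vert\mathcal{S}\vert$ rows and $\vert\mathcal{S}\vert\vert\mathcal{A}_i\vert$ columns, and the identity block pattern is determined purely by the state space rather than by $\pi_{-i}$.

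For the rank claim I would use an $\ell^\infty$-contraction argument. Suppose $v \in \mathbb{R}^{\vert\mathcal{S}\vert}$ satisfies $v^\top A(\pi_{-i}) = 0$. Reading off the $(s, a_i)$-column condition,
\[
v(s) \;=\; \gamma \sum_{s'} P_i^{\pi_{-i}}(s'\mid s, a_i)\, v(s')
\]
for every $s\in\mathcal{S}$ and every $a_i\in\mathcal{A}_i$. Taking absolute values and using that $P_i^{\pi_{-i}}(\cdot\mid s, a_i)$ is a probability distribution,
\[
\vert v(s)\vert \;\le\; \gamma \sum_{s'} P_i^{\pi_{-i}}(s'\mid s, a_i)\,\vert v(s')\vert \;\le\; \gamma \|v\|_\infty.
\]
Maximizing the left-hand side over $s$ yields $\|v\|_\infty \le \gamma \|v\|_\infty$, and since $\gamma\in[0,1)$ this forces $v = 0$. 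The rows are therefore linearly independent.

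No step here is delicate: the result is really the familiar fact that $I - \gamma P$ is invertible for any row-stochastic $P$ whenever $\gamma < 1$, rearranged to accommodate the wide, $\vert\mathcal{A}_i\vert$-fold block structure of $A(\pi_{-i})$. The same contraction argument works for every $\pi_{-i}$ since it uses only row-stochasticity of the marginalized kernel and discounting, so the full row rank property is uniform in $\pi_{-i}$, which is exactly what is needed downstream to guarantee invertibility of $AA^\top$ in the projection formula~\peqref{eqn:proj_tangent} and hence differentiability of $\Pi_{T\mathcal{U}_i}$. The only place to be careful is writing down the column structure of $A(\pi_{-i})$ correctly from~\peqref{eqn:bellman_flow_i} before invoking the contraction; there is no genuine obstacle.
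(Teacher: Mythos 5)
Your proof is correct, and it reaches the conclusion by a different (though equally standard) route than the paper. The paper isolates a single action's $\lvert\mathcal{S}\rvert\times\lvert\mathcal{S}\rvert$ block $I - \gamma P_{i,a}$ of the wide constraint matrix, invokes the Gershgorin circle theorem on its columns to show every disc has leftmost point $1-\gamma>0$ and hence the block is nonsingular, and then observes that appending the remaining columns cannot destroy linear independence of the rows. You instead work with the full wide matrix at once: you take $v^\top A(\pi_{-i})=0$, read off $v(s)=\gamma\sum_{s'}P_i^{\pi_{-i}}(s'\mid s,a_i)v(s')$ from each column, and kill $v$ with the $\ell^\infty$-contraction $\|v\|_\infty\le\gamma\|v\|_\infty$. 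Both arguments ultimately rest on the same fact (nonsingularity of $I-\gamma P$ for a stochastic kernel and $\gamma<1$) and both are uniform in $\pi_{-i}$ because they use only column-stochasticity of the marginalized kernel; yours is more elementary and self-contained (no named eigenvalue-localization theorem), while the paper's single-block reduction makes the ``extra columns can only help'' structure of the wide matrix slightly more explicit. Your reading of the ``fixed independent of other player policies'' clause as a statement about the shape and the uniformity of the rank property (rather than of the entries, which do depend on $\pi_{-i}$ through $P_i^{\pi_{-i}}$) is the sensible one and is exactly what the downstream use in~\peqref{eqn:proj_tangent} requires.
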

\begin{proof}
Note the Bellman flow constraints can be written in matrix form as
\begin{align}
    \sum_{a \in \mathcal{A}_i} (I_{s' \times s} - \gamma P_{i,a}) \mu_{i,a} = (1-\gamma) \mu_0
\end{align}
where $P_i = P(s' \vert s, a)$ denotes the $\mathcal{S} \times \mathcal{S} \times \mathcal{A}_i$ tensor of transition probabilities and $P_{i,a} = P_{i,a}(s' \vert s)$ selects out a single action, leaving an $\mathcal{S} \times \mathcal{S}$ matrix.

This constraint can be written without the $\sum_{a \in \mathcal{A}_i}$ by constructing the rectangular block matrices
\begin{align}
    I_{s' \times (sa)} &= \begin{bmatrix}
    I_{s' \times s}, \ldots, I_{s' \times s}
    \end{bmatrix}
\end{align}
and
\begin{align}
    P_{i, s' \times (sa)} &= \begin{bmatrix}
    P_{i, a_1}, \ldots, P_{i, a_m}
    \end{bmatrix}.
\end{align}

Then
\begin{align}
    (I_{s' \times (sa)} - \gamma P_{i, s' \times (sa)}) \mu_{i} &= (1-\gamma) \mu_0
\end{align}

We can examine the first $s' \times s$ block of $(I_{s' \times (sa)} - \gamma P_{i, s' \times (sa)})$ and show that this matrix is full rank, i.e., of rank $\vert \mathcal{S} \vert$. If this matrix is full rank, then its rows are linearly independent. Extending our view to the full matrix, i.e., all columns, cannot render any of these original rows linearly dependent.

Note that the first block is represented by $(I_{s' \times s} - \gamma P_{i,a})$ for some action $a$. Using the Gershgorin circle theorem, we can bound the eigenvalues of this matrix to lie in a union of circles which all exclude the origin. Consider any column $c$, then every circle has a center in $\mathbb{R}_+$. In addition, the leftmost point of every circle lies in $\mathbb{R}_+$:
\begin{align}
    &(1 - \gamma P_{i,a}(c \vert c) - \sum_{s' \ne c} \vert \gamma P_{i,a}(s' \vert c) \vert
    \\ &= (1 - \gamma P_{i,a}(c \vert c) - \gamma \sum_{s' \ne c} P_{i,a}(s' \vert c)
    \\ &= 1 - \gamma \sum_{s'} P_{i,a}(s' \vert c)
    \\ &= 1 - \gamma > 0.
\end{align}

Therefore, this matrix is non-singular, i.e., full-rank. There are only $\vert\mathcal{S}\vert$ rows, hence the row-rank cannot increase, which proves the claim.
\end{proof}

\begin{lemma}\label{lemma:occ_from_pi_diff}
Player $i$'s state-action occupancy measure $\mu_i$ is a differentiable (and hence continuous) function of the player policies $\pi$.
\end{lemma}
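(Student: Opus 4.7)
The plan is to write $\mu_i$ explicitly as a composition of smooth maps in $\pi$ and then invoke the chain rule. From \peqref{eqn:pi_to_occ_i} we have
\[
\mu_i(\pi_i, \pi_{-i}) = (1-\gamma)\bigl([I - \gamma P^{\pi}]^{-1} \mu_0 \mathbf{1}_{\vert \mathcal{A}_i \vert}^\top\bigr) \odot \pi_i,
\]
so it suffices to argue that each piece in this expression is differentiable in $\pi$, and that their composition is well defined.

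First, I would observe that the joint transition matrix has entries $P^{\pi}(s' \mid s) = \sum_{a \in \mathcal{A}} P(s' \mid s, a)\prod_{j=1}^n \pi_j(a_j \mid s)$, which is a multilinear polynomial in the entries of $\pi_1, \ldots, \pi_n$; in particular $\pi \mapsto P^{\pi}$ is $C^\infty$. Next, I would establish that $I - \gamma P^{\pi}$ is invertible for every $\pi$. This follows from exactly the Gershgorin argument used inside the proof of Lemma~\ref{lemma:bellman_flow_full_rowrank}: each row of $\gamma P^{\pi}$ sums (in absolute value) to $\gamma < 1$, so every Gershgorin disc of $I - \gamma P^{\pi}$ is contained in the open right half-plane and in particular excludes the origin, hence the matrix is nonsingular. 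Equivalently, the Neumann series $\sum_{k \ge 0} \gamma^k (P^{\pi})^k$ converges to $[I - \gamma P^{\pi}]^{-1}$.

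Since matrix inversion is a $C^\infty$ map from the open set of invertible matrices to itself (with derivative $dX^{-1} = -X^{-1}(dX)X^{-1}$), composing with the smooth map $\pi \mapsto I - \gamma P^{\pi}$ shows that $\pi \mapsto [I - \gamma P^{\pi}]^{-1}$ is smooth. Right-multiplication by the constant $\mu_0 \mathbf{1}_{\vert \mathcal{A}_i \vert}^\top$ is linear and hence smooth, and the Hadamard product with $\pi_i$ is bilinear in its two arguments and thus smooth. By the chain rule, the composition defining $\mu_i$ is differentiable in $\pi$, as claimed. Continuity then follows from differentiability.

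I do not expect any serious obstacle: the only non-trivial ingredient is the invertibility of $I - \gamma P^{\pi}$, and this is handled verbatim by the Gershgorin bound already deployed in Lemma~\ref{lemma:bellman_flow_full_rowrank}, using $\gamma < 1$ and the stochasticity of $P^{\pi}$. Everything else is standard smoothness of polynomial, bilinear, and matrix-inversion operations, composed via the chain rule.
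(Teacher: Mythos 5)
Your proposal is correct and takes essentially the same route as the paper: both decompose $\mu_i$ via \peqref{eqn:pi_to_occ_i} into polynomial dependence of $P^{\pi}$ on $\pi$, matrix inversion, and the Hadamard product with $\pi_i$, and both reduce the only nontrivial step to invertibility of $I - \gamma P^{\pi}$ via the Gershgorin argument from Lemma~\ref{lemma:bellman_flow_full_rowrank}. The paper simply writes out the resulting partial derivatives explicitly (including the derivative-of-inverse formula you cite), whereas you package the same content as a chain-rule composition of smooth maps.
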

\begin{proof}
Recall
\begin{align}
    \mu_i(\pi) &= (1 - \gamma) \Big([I - \gamma P^{\pi}]^{-1} \mu_0 \mathbf{1}_{a_i}^\top\Big) \odot \pi_i
\end{align}
where
\begin{align}
    P^{\pi}(s', s) &= \langle P_j^{\pi_{-j}}(s',s,:), \pi_j(s, :) \rangle
\end{align}
and
\begin{align}
    P_j^{\pi_{-j}}(s',s,:) &= \sum_{a_{-j}} P(s' \vert s, a_i, a_{-j}) \prod_{k \ne j} \pi_k(s, a_k)
\end{align}
so
\begin{align}
    P^{\pi}(s', s) &= \sum_{\boldsymbol{a}} P(s' \vert s, \boldsymbol{a}) \prod_{j} \pi_j(s, a_j).
\end{align}

Then,
\begin{align}
    \frac{\partial \mu_i(x,y)}{\partial \pi_j(x',y')} &= (1 - \gamma) \Big[[I - \gamma P^{\pi}]^{-1} \mu_0(s)\Big]_{x} \mathds{1}(x=x',y=y',j=i) \nonumber%
    \\ &+ (1 - \gamma) \frac{\partial}{\partial \pi_j(x',y')} \Big([I - \gamma P^{\pi}]^{-1}\Big) (\mu_0 \mathbf{1}_{a_i}^\top) \odot \pi_i \label{eqn:dmu_dpi_2}
\end{align}
where
\begin{align}
    \frac{\partial}{\partial \pi_j(x',y')} \Big([I - \gamma P^{\pi}]^{-1}\Big) &= \gamma [I - \gamma P^{\pi}]^{-1} \frac{\partial P^{\pi}}{\partial \pi_j(x',y')} [I - \gamma P^{\pi}]^{-1} \label{eqn:ddyn_inv_dpi}
\end{align}
and
\begin{align}
    \frac{\partial P^{\pi}}{\partial \pi_j(x',y')} &= \begin{cases} \label{eqn:dker_dpi}
        0 & \text{ if } x \ne x'
        \\ P_j^{\pi_{-j}}(s',x,y) & \text{ else}.
    \end{cases}
\end{align}

Clearly, this requires inverting the matrix $[I - \gamma P^{\pi}]$. Note that $P^{\pi}$ is a square state transition matrix with distributions on columns. By the same argument as Lemma~\ref{lemma:bellman_flow_full_rowrank}, this matrix has full-row rank, and since it is square, it is non-singular, and hence invertible. Therefore, the derivative~\refp{eqn:dmu_dpi_2} always exists.
\end{proof}

\section{KKT Conditions Imply Fixed Point Sufficiency}

Consider the following constrained optimization problem:
\begin{subequations}
\begin{align}
    \max_{\boldsymbol{x} \in \mathbb{R}^d} &\,\, f(\boldsymbol{x})
    \\ s.t. \,\, g_i(\boldsymbol{x}) &\le 0 \,\, \forall i
    \\ h_j(\boldsymbol{x}) &= 0 \,\, \forall j
\end{align}
\end{subequations}
where $f$ is concave and $g_i$ and $h_j$ represent inequality and equality constraints respectively. If $g_i$ and $h_i$ are affine functions, then any maximizer $\boldsymbol{x}^*$ of $f$ must satisfy the following necessary and sufficient KKT conditions~\citep{ghojogh2021kkt,boyd2004convex}:
\begin{itemize}
    \item Stationarity: $\mathbf{0} \in \partial f(\boldsymbol{x}^*) - \sum_{j} \lambda_j \partial h_j(\boldsymbol{x}^*) - \sum_i \mu_i \partial g_i(\boldsymbol{x}^*)$
    \item Primal feasibility: $h_j(\boldsymbol{x}^*) = 0$ for all $j$ and $g_i(\boldsymbol{x}^*) \le 0$ for all $i$
    \item Dual feasibility: $\mu_i \ge 0$ for all $i$
    \item Complementary slackness: $\mu_i g_i(\boldsymbol{x}^*) = 0$ for all $i$.
\end{itemize}

\begin{lemma}\label{lemma:zero_exp_implies_zero_proj_grad_norm}
Assuming player $k$'s utility, $u_k(x_k, x_{-k})$, is concave in its own strategy $x_k$, a strictly-positive primal-feasible strategy is a best response $\br_k$ if and only if it has zero projected-gradient norm.
\end{lemma}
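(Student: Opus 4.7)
The plan is to apply the KKT conditions recalled at the start of this appendix to player $k$'s best-response program in occupancy-measure space (equations \eqref{eqn:br_obj}--\eqref{eqn:bellman_flow_i}):
\begin{align*}
    \max_{\mu_k \ge 0} \ u_k(\mu_k, \pi_{-k}) \quad \text{s.t.} \quad A(\pi_{-k}) \mu_k = (1-\gamma)\mu_0.
\end{align*}
The objective is concave in $\mu_k$ and the constraints are affine, so KKT is both necessary and sufficient for optimality. The proof therefore reduces to showing that, under strict positivity, the KKT system collapses to the single statement ``$\nabla_{\mu_k} u_k$ lies in the row space of $A$,'' and that this condition is equivalent to vanishing projected-gradient norm.

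For the first reduction I would argue as follows. Because $\mu_k > 0$ componentwise, every inequality constraint $-\mu_k(s,a) \le 0$ is strictly slack, so complementary slackness forces the associated dual multipliers to be zero. Dual feasibility for these multipliers is then automatic, and primal feasibility is assumed. The only KKT condition with content is stationarity, which reduces to $\nabla_{\mu_k} u_k(\mu_k, \pi_{-k}) = A(\pi_{-k})^\top \lambda$ for some $\lambda \in \mathbb{R}^{|\mathcal{S}|}$ (the multipliers for the equality constraint are unrestricted in sign, so no additional check is required). Equivalently, $\nabla_{\mu_k} u_k \in \text{row}(A)$.

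For the second reduction, I would invoke Lemma~\ref{lemma:bellman_flow_full_rowrank} to ensure $A$ has full row rank, so that the projector $\Pi_{T\mathcal{U}_k} = I - A^\top (A A^\top)^{-1} A$ in \eqref{eqn:proj_tangent} is well-defined and projects orthogonally onto $\ker A$, the tangent space of the equality-constrained affine manifold. Since $\ker A = \text{row}(A)^\perp$, a vector has zero projection onto the tangent space if and only if it lies in $\text{row}(A)$. Combined with the previous paragraph, this gives: under strict positivity and primal feasibility, the KKT conditions are satisfied if and only if $\|\Pi_{T\mathcal{U}_k}(\nabla_{\mu_k} u_k)\| = 0$.

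Putting both directions together, the ``only if'' direction follows because any best response is a maximizer of the concave program and hence satisfies KKT, which by the reduction above forces vanishing projected-gradient norm. The ``if'' direction follows because, conversely, a strictly-positive primal-feasible $\mu_k$ with vanishing projected-gradient norm satisfies stationarity, primal feasibility (by assumption), dual feasibility, and complementary slackness (vacuously), so by sufficiency of KKT for concave-affine programs it is a maximizer, i.e., a best response. The only non-routine step is pinning down that the equality-constraint multiplier is unrestricted (so no sign check arises) and that strict positivity kills the inequality multipliers; everything else is linear algebra handed to us by Lemma~\ref{lemma:bellman_flow_full_rowrank}.
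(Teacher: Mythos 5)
Your proposal is correct and follows essentially the same route as the paper's proof: apply the KKT conditions to the concave-objective, affine-constraint best-response program, use strict positivity plus complementary slackness to eliminate the inequality multipliers, reduce stationarity to membership of the utility gradient in the row space of $A$, and identify that condition with vanishing projected-gradient norm via the orthogonal projector onto $\ker A$. The only cosmetic difference is that the paper re-derives the projector from the gradient-projection method of Luenberger and phrases stationarity with subdifferentials, whereas you invoke $\ker A = \mathrm{row}(A)^\perp$ directly and cite Lemma~\ref{lemma:bellman_flow_full_rowrank} explicitly for the invertibility of $A A^\top$.
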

\begin{proof}
Consider the problem of formally computing $\epsilon_k(\boldsymbol{x}) = \max_{z \ge 0, Az=b} u_k(z, x_{-k}) - u_k(x_k, x_{-k})$:
\begin{subequations}
\begin{align}
    \max_{z \in \mathbb{R}^d} &\,\, u_k(z, x_{-k}) - u_k(x_k, x_{-k})
    \\ s.t. -z_k &\le 0 \,\, \forall k
    \\ A_j z - b_j &= 0 \,\, \forall j.
\end{align}
\end{subequations}
Note that the objective is linear (concave) in $z$ and the constraints are affine, therefore the KKT conditions are necessary and sufficient for optimality. Recall that we assume that the solution $z^*$ is positive, $z^*_k > 0$ for each $k$. Also, let $e_k$ be a onehot vector, i.e., a zeros vector except with a $1$ at index $k$. Mapping the KKT conditions onto this problem yields the following:
\begin{itemize}
    \item Stationarity: $\mathbf{0} \in \partial u_k(z^*, x_{-k}) - \sum_j \lambda_j A_j + \sum_k \mu_k e_k$
    \item Primal feasibility: $A_j z = b_j$ for all $j$
    \item Dual feasibility: $\mu_i \ge 0$ for all $k$
    \item Complementary slackness: $-\mu_k z^*_k = 0$ for all $k$
\end{itemize}

where $\partial u_k(z, \cdot)$ is the subdifferential at $z$. Consider any primal-feasible point $Az^* = b$.
Given our assumption that $z^*_k > 0$, by complementary slackness and dual feasibility, each $\mu_k$ must be identically zero. This implies the stationarity condition can be simplified to $\mathbf{0} \in \partial u_k(z^*, x_{-k}) - \sum_j \lambda_j A_j = \partial u_k(z^*, x_{-k}) - A^\top \lambda$. Rearranging terms we find that for any $z^*$, there exists $\lambda$ such that
\begin{align}
    \sum_j \lambda_j A_j &\in \partial u_k(z^*, x_{-k}).
\end{align}
Equivalently, elements of $\partial u_k(z^*, x_{-k})$ are in the row-span of $A$.

For the rest of the proof, we follow the derivation of the gradient projection method~\citep[Sec 12.4, p 364]{luenberger1984linear}. Let $\nabla_k \in \partial u_k$ be a subderivative (gradient), i.e., an element of the subdifferential.

Note that, in general, we can write any gradient as a sum of elements from the row-span of $A$ and its orthogonal complement $d_k$, which lies in the tangent space of the feasible set:
\begin{align}
    \nabla_k = d_k + A^\top \lambda.
\end{align}

We may solve for $\lambda$ through the requirement that $Ad_k = 0$, i.e., any movement within the tangent space of the feasible set remains feasible. Thus
\begin{align}
    Ad_k &= A\nabla_k - (AA^\top)\lambda = 0
    \\ \implies \lambda &= (AA^\top)^{-1}A\nabla_k
    \\ \implies d_k &= \nabla_k - A^\top \lambda = [I - A^\top(AA^\top)^{-1}A] \nabla_k
    \\ &= \Pi_{TA}(\nabla_k)
\end{align}
where $\Pi_{TA}[I - A^\top(AA^\top)^{-1}A]$ is the matrix that projects any gradient vector onto the tangent space of the feasible set given by the constraint matrix $A$.

The fact that elements of $\partial u_k(z^*, x_{-k})$ are in the row-span of $A$ implies that $0 = d_k = \Pi_{TA}(\nabla_k)$ necessarily, completing the claim.
\end{proof}

\section{Entropy Regularized Loss Bounds Exploitability}

Our derived exploitability bound only requires concavity of the utility, bounded diameter of the feasible set, and linearity of feasible constraints (i.e., feasible set is subset of hyperplane). In what follow, let player $k$'s loss be the negative of their utility, i.e., $\ell_k = -u_k$.

\begin{lemma}\label{lemma:exp_to_grad_norm}
The amount a player can gain by deviating is upper bounded by a quantity proportional to the norm of the projected-gradient:
\begin{align}
    \epsilon_k(\boldsymbol{\mu}) &\le \sqrt{2} ||\Pi_{T\mathcal{U}_k}(\nabla^{k}_{\mu_k})||.
\end{align}
\end{lemma}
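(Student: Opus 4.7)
}
The plan is to linearize the exploitability gap using concavity, then exploit the fact that any deviation inside the feasible set lies in the tangent space $T\mathcal{U}_k$ so the radial (row-space) component of $\nabla^k_{\mu_k}$ contributes nothing. First I would rewrite
\[
\epsilon_k(\boldsymbol{\mu}) = \max_{z \in \mathcal{M}_k(\pi_{-k})} u_k(z, \pi_{-k}) - u_k(\mu_k, \pi_{-k}),
\]
and use concavity of $u_k$ in its first argument (Definition \ref{def:cmg}) to produce the linear upper bound $u_k(z, \pi_{-k}) - u_k(\mu_k, \pi_{-k}) \le \langle \nabla^k_{\mu_k}, z - \mu_k\rangle$. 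Taking a supremum over the feasible set gives
\[
\epsilon_k(\boldsymbol{\mu}) \le \max_{z \in \mathcal{M}_k(\pi_{-k})} \langle \nabla^k_{\mu_k}, z - \mu_k\rangle.
\]

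Next I would observe that both $z$ and $\mu_k$ satisfy the same Bellman-flow equality constraint $A(\pi_{-k}) z = (1-\gamma)\mu_0$, so their difference $z-\mu_k$ lies in the null space of $A(\pi_{-k})$, i.e.\ in the tangent space of $\mathcal{U}_k$. Writing $\nabla^k_{\mu_k} = \Pi_{T\mathcal{U}_k}(\nabla^k_{\mu_k}) + A^\top\lambda$ for the unique orthogonal decomposition (which is well-defined since Lemma \ref{lemma:bellman_flow_full_rowrank} ensures $A$ has full row rank, so $AA^\top$ is invertible), the component $A^\top\lambda$ is annihilated when paired against $z-\mu_k$:
\[
\langle \nabla^k_{\mu_k}, z-\mu_k\rangle = \langle \Pi_{T\mathcal{U}_k}(\nabla^k_{\mu_k}), z-\mu_k\rangle.
\]
Then Cauchy--Schwarz yields
\[
\epsilon_k(\boldsymbol{\mu}) \le \|\Pi_{T\mathcal{U}_k}(\nabla^k_{\mu_k})\| \cdot \max_{z \in \mathcal{M}_k(\pi_{-k})} \|z - \mu_k\|.
\]

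The final step is to bound the feasible-set diameter by $\sqrt{2}$. Here I would use the fact, noted following \eqref{eqn:nonneg_i}, that every valid state--action occupancy measure is a probability distribution on $\mathcal{S}\times\mathcal{A}_k$ (it is non-negative and sums to $1$, since $\sum_{s,a}\mu(s,a) = (1-\gamma)\sum_t \gamma^t = 1$). Consequently $z,\mu_k \in \Delta^{|\mathcal{S}|\cdot|\mathcal{A}_k|}$, and any two points of a probability simplex are at $\ell_2$-distance at most $\sqrt{2}$ (the maximum is attained between two distinct vertices). Plugging in $\|z-\mu_k\|\le\sqrt{2}$ yields the claim.

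The only slightly delicate step is the orthogonal decomposition that replaces $\nabla^k_{\mu_k}$ with its projection; this relies on $A(\pi_{-k})$ having full row rank, which is exactly Lemma \ref{lemma:bellman_flow_full_rowrank}, and on the fact that the feasible set is cut out of the simplex by \emph{affine} equalities (so differences of feasible points lie in a linear subspace, not merely a convex set). Everything else is a one-line application of concavity, Cauchy--Schwarz, and the simplex-diameter bound.
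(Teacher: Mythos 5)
Your proposal is correct and follows essentially the same route as the paper's proof: linearize via concavity (the paper phrases it as convexity of the loss $\ell_k=-u_k$), kill the normal component of the gradient against the feasible direction $z-\mu_k$ (you do this via the explicit decomposition $\nabla^k_{\mu_k}=\Pi_{T\mathcal{U}_k}(\nabla^k_{\mu_k})+A^\top\lambda$, the paper via symmetry and idempotency of the projection matrix --- the same identity), then apply Cauchy--Schwarz with the $\sqrt{2}$ simplex-diameter bound. Your explicit justification that occupancy measures sum to one is a welcome detail the paper leaves implicit.
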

\begin{proof}
Let $z$ be any point in the feasible set. First note that $\Pi_{T\mathcal{U}_k}(z) = Bz$ where $B$ is an orthogonal projection matrix; this implies $B^2 = B = B^\top$. Then by convexity of $\ell_k$ with respect to $z$,
\begin{subequations}
\begin{align}
    \ell_k(\boldsymbol{\mu}) - \ell_k(z, \mu_{-k}) &\le (\nabla^{k}_{\mu_k})^\top (\mu_k - z)
    \\ &= (\nabla^{k}_{\mu_k})^\top \Pi_{T\mathcal{U}_k}(\underbrace{\mu_k - z}_{\in T\mathcal{U}})
    \\ &= (\underbrace{\Pi_{T\mathcal{U}_k}(\nabla^k_{\mu_k})}_{Bz=B^\top z})^\top \underbrace{(\mu_k - z)}_{\texttt{Diam}(\mathcal{M}_k) \le \sqrt{2}}
    \\ &\le \sqrt{2} ||\Pi_{T\mathcal{U}_k}(\nabla^{k}_{\mu_k})||
\end{align}
where the first equality follows from the fact any two points $z$ and $\mu_k$ lying in the same hyperplane, by definition, form a direction lying in the tangent space of the hyperplane. The second equality follows from symmetry of the projection operator and simply grouping its application to the left hand term; we also note that the feasible set is a subset of the simplex, which has a diameter of $\sqrt{2}$. Finally, the last step follows from Cauchy-Schwarz.
\end{subequations}
\end{proof}

\setcounter{theorem}{1}
\begin{theorem}[Low Temperature Approximate Equilibria are Approximate Nash Equilibria]\label{theorem:qre_to_exp}
Let $\nabla^{k\tau}_{\mu_k}$ be player $k$'s entropy regularized gradient and $\boldsymbol{\mu}$ be an approximate equilibrium of the entropy regularized game. Then it holds that
\begin{align}
    \epsilon_k = \ell_k(\boldsymbol{\mu}) - \ell_k(\texttt{BR}_k, \mu_{-k}) &\le \tau \log(\vert \mathcal{S} \vert \vert \mathcal{A}_k \vert) + \sqrt{2} ||\Pi_{T\mathcal{U}_k}(\nabla^{k\tau}_{\mu_k})||.
\end{align}
\end{theorem}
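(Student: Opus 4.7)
The plan is to reduce the claim to \autoref{lemma:exp_to_grad_norm} by viewing the entropy-regularized utility as a concave perturbation of the original. Writing $\ell_k^\tau = \ell_k - \tau H(\mu_k)$ (since $u_k^\tau = u_k + \tau H$ and $\ell_k = -u_k$) we have the algebraic decomposition
\begin{align*}
\epsilon_k(\boldsymbol{\mu})
&= \ell_k(\boldsymbol{\mu}) - \ell_k(\texttt{BR}_k, \mu_{-k}) \\
&= \bigl[\ell_k^\tau(\boldsymbol{\mu}) - \ell_k^\tau(\texttt{BR}_k, \mu_{-k})\bigr] + \tau\bigl[H(\mu_k) - H(\texttt{BR}_k)\bigr].
\end{align*}
The first bracketed term will be controlled by applying \autoref{lemma:exp_to_grad_norm} to the regularized game; the second will be bounded using the standard entropy bound on $\Delta^{\mathcal{S} \times \mathcal{A}_k}$.

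For the first bracket I would first observe that the regularized loss $\ell_k^\tau$ is still convex in $\mu_k$ (since $-H$ is convex) and the feasible set $\mathcal{U}_k$ and its diameter bound $\sqrt{2}$ are unchanged, so the hypotheses of \autoref{lemma:exp_to_grad_norm} carry over verbatim to the regularized game. That lemma applied to $\ell_k^\tau$ gives
\[
\ell_k^\tau(\boldsymbol{\mu}) - \ell_k^\tau(\texttt{BR}_k^\tau, \mu_{-k}) \;\le\; \sqrt{2}\,\lVert \Pi_{T\mathcal{U}_k}(\nabla^{k\tau}_{\mu_k}) \rVert,
\]
where $\texttt{BR}_k^\tau$ is the best response in the regularized game. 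Since $\texttt{BR}_k^\tau$ minimizes $\ell_k^\tau(\cdot,\mu_{-k})$ over $\mathcal{U}_k$, we have $\ell_k^\tau(\texttt{BR}_k^\tau, \mu_{-k}) \le \ell_k^\tau(\texttt{BR}_k, \mu_{-k})$, so the inequality holds \emph{a fortiori} with $\texttt{BR}_k$ in place of $\texttt{BR}_k^\tau$ on the right, which is exactly what we need for the first bracket above.

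For the second bracket, I drop $-\tau H(\texttt{BR}_k) \le 0$ using non-negativity of Shannon entropy, and then bound $H(\mu_k) \le \log(|\mathcal{S}||\mathcal{A}_k|)$ since $\mu_k$ is a probability distribution on $\mathcal{S} \times \mathcal{A}_k$ (this is the maximum entropy of any distribution on a set of that size). Combining the two estimates yields the stated inequality.

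The only mildly delicate point \textemdash and what I would flag as the main thing to be careful about \textemdash is the sign-bookkeeping of $\tau$ and the direction of the best-response swap: one must use that $\texttt{BR}_k^\tau$ is an argmin of $\ell_k^\tau$ (not $\ell_k$), so that switching from $\texttt{BR}_k$ to $\texttt{BR}_k^\tau$ on the right-hand side of the regularized-loss bound only weakens (not strengthens) the inequality. Everything else is a routine combination of \autoref{lemma:exp_to_grad_norm}, the definition $\ell_k^\tau = \ell_k - \tau H$, and the entropy bound; no additional machinery (no differentiability arguments beyond those already used in the lemma, no topology) is needed.
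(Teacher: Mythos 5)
Your proposal is correct and follows essentially the same route as the paper's proof: the same add-and-subtract decomposition into a regularized-loss gap plus an entropy difference, the same application of Lemma~\ref{lemma:exp_to_grad_norm} to the (still convex) regularized loss over the unchanged feasible set, and the same use of nonnegativity and the $\log(\vert\mathcal{S}\vert\vert\mathcal{A}_k\vert)$ maximum of Shannon entropy. Your explicit handling of the swap from $\texttt{BR}_k$ to $\texttt{BR}_k^\tau$ is just the paper's step of replacing $\ell_k^\tau(\texttt{BR}_k,\mu_{-k})$ by $\min_{z\in\mathcal{M}_k}\ell_k^\tau(z,\mu_{-k})$, stated in different words.
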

\begin{proof}
Beginning with the definition of exploitability, we find
\begin{subequations}
\begin{align}
    \ell_k(\boldsymbol{\mu}) - \ell_k(\texttt{BR}_k, \mu_{-k}) &= \big( \ell_k(\boldsymbol{x}) + \tau S(\mu_k) - \tau S(\mu_k) \big)
    \\ &\qquad- \big( \ell_k(\texttt{BR}_k, \mu_{-k}) + \tau S(\texttt{BR}_k) - \tau S(\texttt{BR}_k) \big) \nonumber
    \\ &= \ell_k^{\tau}(\boldsymbol{\mu}) - \ell_k^{\tau}(\texttt{BR}_k, \mu_{-k}) + \tau \big( S(\mu_k) - S(\texttt{BR}_k) \big)
    \\ &\le \ell_k^{\tau}(\boldsymbol{\mu}) - \min_{z \in \mathcal{M}_k} \ell_k^{\tau}(z, \mu_{-k}) + \tau \max_{z' \in \mathcal{M}_k} S(z')
    \\ &\le \sqrt{2} ||\Pi_{T\mathcal{U}_k}(\nabla^{k\tau}_{\mu_k})|| + \tau \max_{z' \in \mathcal{M}_k} S(z')
    \\ &\le \sqrt{2} ||\Pi_{T\mathcal{U}_k}(\nabla^{k\tau}_{\mu_k})|| + \tau \log(\vert \mathcal{S} \vert \vert \mathcal{A}_k \vert)
\end{align}
\end{subequations}
where the second equality follows from the definition of player $k$'s entropy regularized loss $\ell_k^{\tau}$, the first inequality from nonnegativity of entropy $S$, the second inequality from convexity of $\ell_k^{\tau}$ with respect to its first argument (Lemma~\ref{lemma:exp_to_grad_norm}), and the last from the maximum possible value of Shannon entropy over distributions on $\vert \mathcal{S} \vert \vert \mathcal{A}_k \vert$ elements.
\end{proof}

\begin{corollary}[$\mathcal{L}^{\tau}$ Scores Nash Equilibria]\label{corollary:qre_to_ne}
Let $\mathcal{L}^{\tau}(\boldsymbol{\mu})$ be our proposed entropy regularized loss function and $\boldsymbol{\mu}$ be any strategy profile. Then it holds that
\begin{align}
    \epsilon &\le \tau \log(\vert \mathcal{S} \vert \max_k \vert \mathcal{A}_k \vert) + \sqrt{2n\mathcal{L}^{\tau}(\boldsymbol{\mu}}).
\end{align}
\end{corollary}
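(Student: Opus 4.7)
The plan is to apply Theorem~\ref{theorem:qre_to_exp} player-by-player and then aggregate the per-player guarantees using the definition $\epsilon = \max_{i=1,\dots,n} \epsilon_i$. First, I would invoke Theorem~\ref{theorem:qre_to_exp} for each player $i$, obtaining
\[
\epsilon_i(\pi) \le \tau \log(\vert \mathcal{S} \vert \vert \mathcal{A}_i \vert) + \sqrt{2}\,\|\Pi_{T\mathcal{U}_i}(\nabla^{i\tau}_{\mu_i})\|.
\]
Taking the max over $i$ on both sides and using the elementary subadditivity $\max_i (a_i + b_i) \le \max_i a_i + \max_i b_i$ (valid for non-negative $a_i, b_i$), I would decouple the entropy term from the gradient-norm term.

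For the entropy term, monotonicity of $\log$ gives $\max_i \log(\vert\mathcal{S}\vert\vert\mathcal{A}_i\vert) = \log(\vert\mathcal{S}\vert \max_i \vert\mathcal{A}_i\vert)$, which matches the first summand of the claim. For the gradient-norm term, I need to bound $\max_i \|\Pi_{T\mathcal{U}_i}(\nabla^{i\tau}_{\mu_i})\|$ in terms of $\mathcal{L}^{\tau}(\pi) = \sum_i \|\Pi_{T\mathcal{U}_i}(\nabla^{i\tau}_{\mu_i})\|^2$. Using the crude chain $\max_i c_i \le \sum_i c_i \le \sqrt{n}\sqrt{\sum_i c_i^2}$ (triangle inequality followed by Cauchy--Schwarz, applied to the non-negative quantities $c_i = \|\Pi_{T\mathcal{U}_i}(\nabla^{i\tau}_{\mu_i})\|$) yields
\[
\max_i \|\Pi_{T\mathcal{U}_i}(\nabla^{i\tau}_{\mu_i})\| \le \sqrt{n\,\mathcal{L}^{\tau}(\pi)},
\]
producing the $\sqrt{n}$ factor that appears in the statement. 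Combining the two bounds gives the desired inequality.

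There is no real obstacle here: once Theorem~\ref{theorem:qre_to_exp} is in hand the corollary is essentially a bookkeeping step, simply aggregating the per-player guarantees across all $n$ players. The only minor design choice is how to pass from the per-player projected-gradient norm to the aggregate loss $\mathcal{L}^{\tau}$; the $\sqrt{n}$ constant above could in fact be tightened to $1$ via $\max_i c_i \le \sqrt{\sum_i c_i^2}$, but the looser form matches the constant stated in the corollary and does not affect any asymptotic conclusions.
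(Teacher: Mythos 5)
Your proposal is correct and follows essentially the same route as the paper's own proof: apply Theorem~\ref{theorem:qre_to_exp} per player, split the max via subadditivity, absorb the $\log$ term with $\max_k \log(\vert\mathcal{S}\vert\vert\mathcal{A}_k\vert) = \log(\vert\mathcal{S}\vert\max_k\vert\mathcal{A}_k\vert)$, and pass from the max of projected-gradient norms through the $\ell_1$--$\ell_2$ inequality to obtain $\sqrt{n\,\mathcal{L}^{\tau}}$. Your side observation that the $\sqrt{n}$ could be tightened to $1$ via $\max_k c_k \le \sqrt{\sum_k c_k^2}$ is also valid, though the paper keeps the looser constant.
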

\begin{proof}
Beginning with the definition of exploitability and applying Lemma~\ref{theorem:qre_to_exp}, we find
\begin{subequations}
\begin{align}
    \epsilon &= \max_k \max_z \ell_k(\boldsymbol{\mu}) - \ell_k(z, \mu_{-k}) \qquad \text{(recall each $\epsilon_k \ge 0$)}
    \\ &\le \max_k \Big[ \tau \log(\vert \mathcal{S} \vert \vert \mathcal{A}_k \vert) + \sqrt{2} ||\Pi_{T\mathcal{U}_k}(\nabla^{k\tau}_{\mu_k})|| \Big]
    \\ &\le \tau \max_k \log(\vert \mathcal{S} \vert \vert \mathcal{A}_k \vert) + \sqrt{2} \max_k ||\Pi_{T\mathcal{U}_k}(\nabla^{k\tau}_{\mu_k})||
    \\ &\le \tau \max_k \log(\vert \mathcal{S} \vert \vert \mathcal{A}_k \vert) + \sqrt{2} \sum_k ||\Pi_{T\mathcal{U}_k}(\nabla^{k\tau}_{\mu_k})||
    \\ &= \tau \log(\vert \mathcal{S} \vert \max_k \vert \mathcal{A}_k \vert) + \sqrt{2} \Big|\Big| ||\Pi_{T\mathcal{U}_1}(\nabla^{1\tau}_{\mu_1})||_2, \ldots, ||\Pi_{T\mathcal{U}_n}(\nabla^{n\tau}_{\mu_n})||_2 \Big|\Big|_{1}
    \\ &\le \tau \log(\vert \mathcal{S} \vert \max_k \vert \mathcal{A}_k \vert) + \sqrt{2n} \Big|\Big| ||\Pi_{T\mathcal{U}_1}(\nabla^{1\tau}_{\mu_1})||_2, \ldots, ||\Pi_{T\mathcal{U}_n}(\nabla^{n\tau}_{\mu_n})||_2 \Big|\Big|_{2}
    \\ &\le \tau \log(\vert \mathcal{S} \vert \max_k \vert \mathcal{A}_k \vert) + \sqrt{2n} \sqrt{\sum_k ||\Pi_{T\mathcal{U}_k}(\nabla^{k\tau}_{\mu_k})||_2^2}
    \\ &= \tau \log(\vert \mathcal{S} \vert \max_k \vert \mathcal{A}_k \vert) + \sqrt{2n\mathcal{L}^{\tau}(\boldsymbol{\mu}}).
\end{align}
\end{subequations}
\end{proof}

\section{\pgl{} Hyperparameters}\label{app:hyperparams}

Table~\ref{tab:annealing} lists out the three types of annealing schedules used in the experiments. Table~\ref{tab:lr_per_domain} lists out the learning rate used in each domain as well as the type of annealing schedule used. All experiments were run with Adam~\citep{kingma2014adam}.

\begin{table}[ht]
    \centering
    \begin{tabular}{c|c|c}
        \multicolumn{3}{c}{Requirements to Anneal} \\
        Minimum Temperature & $\tau_t \leftarrow \max(\tau_t, \min \tau)$ & $10^{-2}$ \\
        Minimum Iterations Per Temperature & $\min_{\Delta t} \{ \tau_t - \tau_{t + \Delta t} \vert \tau_t - \tau_{t + \Delta t} > 0 \}$ & $50$ \\
        Loss Threshold Requirement for Annealing & $\mathcal{L}^{\tau_t} \le \epsilon$ & $10^{-1}$ \\ \hline\hline
        \multicolumn{3}{c}{Anneal Rule (When Requirements Met)} \\
        Type 1 Anneal Rate & $\tau_t$ & $\tau=10^{-\lfloor t / 1000 \rfloor}$ \\
        Type 2 Anneal Rate & $\tau_{t+1} / \tau_t$ & 0.8 \\
        Type 3 Anneal Rule & $\tau_{t+1}$ & $\tau_t + \frac{1}{10} \cdot \mathcal{L}^{\tau_t} / \min(\frac{\partial \mathcal{L}^{\tau}}{\partial \tau}, -\mathcal{L}^{\tau})$ \\
    \end{tabular}
    \caption{Annealing Schedules}
    \label{tab:annealing}
\end{table}

\begin{table}[ht]
    \centering
    \begin{tabular}{c|c|c|c}
        Domain / Hyperparameter & Learning Rate & Anneal Type & Iterations ($T$) \\ \hline\hline
        IPD (Creativity) & $10^{-1}$ & Type 1 & $8 \times 10^3$ \\
        IPD (Imitation) & $10^{-2}$ & Type 1 & $8 \times 10^3$ \\
        IPGG & $10^{-1}$ & Type 1 & $8 \times 10^3$ \\
        El Farol & $10^{-1}$ & Type 1 & $8 \times 10^3$ \\
        Bach-Stravnisky & $10^{-1}$ & Type 1 & $1 \times 10^3$ \\
        Synthetic & $10^{-1}$ & Type 2 & $8 \times 10^3$ \\
        Robot Warehouse & $10^{-2}$ & Type 2 & $8 \times 10^3$ \\
        Pathfinding & $10^{-2}$ & Type 3 & $1 \times 10^6$
    \end{tabular}
    \caption{Learning Rate and Annealing Schedule by Domain.}
    \label{tab:lr_per_domain}
\end{table}

\section{Descriptions of Domains}\label{app:games}

Here, we provide further details on the domains we used in experiments.

\subsection{Iterated Normal-form Games}\label{app:games:infgs}
We provide the payoffs for the iterated prisoner's dilemma and Bach-Stravinsky game used in experiments. See~\autoref{sec:experiments} in the main body for a description of the iterated public goods game and El Farol bar problem.

\def\sgtextcolor{blue}%
\def\sglinecolor{red}%
\begin{figure}[htb]\hspace*{\fill}%
\begin{game}{2}{2}
& $C$ & $D$ \\
$C$ & $-1,-1$ & $-3,0$ \\
$D$ & $0,-3$ & $-2,-2$
\end{game}
\hspace*{\fill}%
$\implies$
\hspace*{\fill}%
\begin{game}{2}{2}
& $C$ & $D$ \\
$C$ & $\sfrac{2}{3},\sfrac{2}{3}$ & $0,1$ \\
$D$ & $1,0$ & $\sfrac{1}{3},\sfrac{1}{3}$
\end{game}
\hspace*{\fill}%
\caption[]{Prisoner's Dilemma Game: We shift and normalize the payoffs of the prisoner's dilemma game (left) to be non-negative and with maximum value $1$ (right).}
\end{figure}

\def\sgtextcolor{blue}%
\def\sglinecolor{red}%
\begin{figure}[htb]\hspace*{\fill}%
\begin{game}{2}{2}
& $C$ & $D$ \\
$C$ & $3,2$ & $0,0$ \\
$D$ & $0,0$ & $2,3$
\end{game}\hspace*{\fill}%
\caption[]{Bach-Stravinsky Game.}
\end{figure}

\subsection{Grid World Domains}\label{app:games:gridworlds}

\paragraph{Pathfinding} We use OpenSpiel's~\citep{lanctot2019openspiel} pathfinding game with a horizon of $1000$: \url{https://openspiel.readthedocs.io/en/latest/games.html}. See Figure~\ref{fig:pathfinding} for a visual of the grid specification. To enable the game with an infinite horizon, we transition all agents back to their starting positions (left most grid in Figure~\ref{fig:pathfinding}) after they reach the goal state. All positive rewards in OpenSpiel's variant are replaced with $+100$.

\paragraph{Synthetic Illustrative Safety Domain} We consider a simple $2$-player, $2$-state, $2$-action, symmetric convex Markov game. If both agents select action $0$, they transition from their current state to the other state; otherwise, they remain put. We set $\gamma=0.95$ and the initial state measure $\mu_0$ to be uniform.

For this domain, we pose a ``safe'' MARL objective; we designate a set of safe long-run visitation measures for certain states and usage of certain actions; equivalently, we rule out certain unsafe states and actions.
As an example of ``safe'' MARL, define the following convex loss (negative utility) over occupancy measures for player $1$:
\begin{align}
    -u_i(\mu_i) = \ell_i(\mu_i) &= \max(0, ||\mu_a - t_a||_{\infty} - \sfrac{1}{20}) + \max(0, ||\mu_s - t_s||_{\infty} - \sfrac{1}{4})
\end{align}
where $\mu_a$ and $\mu_s$ are player $1$'s action and state marginals respectively. The target measures $t_a$ and $t_s$ are used along with the radii $\sfrac{1}{20}$ and $\sfrac{1}{4}$ to encode the regions of ``safe'' occupancy measures. If player $1$ deviates from either region by more than the radius (as measured by the infinity norm), then they accrue a loss. Otherwise, player $1$'s loss is zero.

In Figure~\ref{fig:nonconvex_brp}, we fix player $2$ to use the following policy:
\begin{align*}
    \pi_2^{\rom{1}}(a_0 \vert s) &= \begin{cases}
        0.40 & \text{if $s=s_0$}
        \\ 0.80 & \text{else}
    \end{cases},%
\end{align*}

We provide numpy code describing the transition kernel $P$ and reward function $r(s, a)$ (denoted by \texttt{pt} for ``payoff tensor'' below) for the synthetic safety domain used in Figure~\ref{fig:nonconvex_brp}.

\begin{verbatim}
ns = 2
npl = 2
na = np.ones(npl, dtype=int) * 2

epsilon = 0.0
transition_env = np.zeros((ns, ns) + tuple(na))

# a=0 (coordinate), a=1 (not coordinate)
transition_env[1, 0, 0, 0] = 1 - epsilon
transition_env[1, 0, 0, 1] = epsilon
transition_env[1, 0, 1, 0] = epsilon
transition_env[1, 0, 1, 1] = 0

transition_env[0, 1, 0, 0] = 1 - epsilon
transition_env[0, 1, 0, 1] = epsilon
transition_env[0, 1, 1, 0] = epsilon
transition_env[0, 1, 1, 1] = 0

# only two states so prob of one is 1 minus prob of other
transition_env[0, 0, :, :] = 1 - transition_env[1, 0, :, :]
transition_env[1, 1, :, :] = 1 - transition_env[0, 1, :, :]

assert np.all(np.sum(transition_env, axis=0) == 1)

self.transition_env = transition_env.astype(float)
self.pt = np.zeros((npl, ns) + tuple(na), dtype=float)
\end{verbatim}

\paragraph{Robot Warehouse Safety Grid World Domain} We provide numpy code describing the transition kernel $P$ and reward function $r(s, a)$ (denoted by \texttt{pt} for ``payoff tensor'' below) for the robot warehouse safety domain described in~\autoref{sec:experiments}.

This domain represents a $3$-cell grid world where two robots pick up and drop off packages (Left Drop-off $\leftrightarrow$ Package Pickup $\leftrightarrow$ Right Drop-off). The pickup point is the middle cell and the two robots drop off in the cells on either side. The left robot always drops off on the left; the right always drops off on the right. They can move slow or fast ($2$ actions each). They get $+1$ for slowly dropping off package, $+2$ for quickly dropping of a package. They always alternate between picking up and dropping off, but their chosen speed affects their probability of moving from the pickup to the drop-off point. Two robots attempting fast pickups is dangerous, so we construct a convex loss to penalize that behavior. At the same time, the robots are incentivized to move fast at the pickup point (to move to the drop-off and earn more reward). The domain exhibits an additional complexity given by the following social dilemma. Moving fast at the pickup point when the other is moving slowly increases the probability of successfully picking up a package and moving to the drop off; moving slow with a fast partner results in much lower probability. This simulates a scenario where packages arrive at the drop-off at a constant rate.

\begin{verbatim}
ns = 4
npl = 2
na = np.ones(npl, dtype=int) * 2

transition_env = np.zeros((ns, ns) + tuple(na))

# s=0: (pickup, pickup)
# s=1: (pickup, dropoff)
# s=2: (dropoff, pickup)
# s=3: (dropoff, dropoff)

# probability of moving from dropoff to pickup is independent of the other
# agent
p_reset = 1.0
# probability of moving from pickup to dropoff is independent of other agent
# if alone at pickup point
p_drop_alone_slow = 0.7
p_drop_alone_fast = 0.8
# generic low, medium, and high probabilities
p_low = 0.2
p_mid = 0.5
p_high = 0.8

# a=0 (slow), a=1 (fast)

# reward is +1 for visiting dropoff state, else 0
reward = np.zeros((npl, ns,) + tuple(na))
# reward[1, 1] = 1.0
# reward[0, 2] = 1.0
# reward[:, 3] = 1.0
reward[1, 1, :, 0] = 1.0
reward[1, 1, :, 1] = 2.0
reward[0, 2, 0, :] = 1.0
reward[0, 2, 1, :] = 2.0
reward[0, 3, 0, :] = 1.0
reward[0, 3, 1, :] = 2.0
reward[1, 3, :, 0] = 1.0
reward[1, 3, :, 1] = 2.0

#### s'=0 #####
# s=0 --> s'=0: (pickup, pickup) --> (pickup, pickup)
# define s'= 1, 2, 3 first
# skip
# transition_env[0, 0, 0, 0] =
# transition_env[0, 0, 0, 1] =
# transition_env[0, 0, 1, 0] =
# transition_env[0, 0, 1, 1] =

# s=1 --> s'=0: (pickup, dropoff) --> (pickup, pickup)
transition_env[0, 1, 0, 0] = (1.0 - p_drop_alone_slow) * p_reset
transition_env[0, 1, 0, 1] = (1.0 - p_drop_alone_slow) * p_reset
transition_env[0, 1, 1, 0] = (1.0 - p_drop_alone_fast) * p_reset
transition_env[0, 1, 1, 1] = (1.0 - p_drop_alone_fast) * p_reset

# s=2 --> s'=0: (dropoff, pickup) --> (pickup, pickup)
transition_env[0, 2, 0, 0] = p_reset * (1.0 - p_drop_alone_slow)
transition_env[0, 2, 0, 1] = p_reset * (1.0 - p_drop_alone_fast)
transition_env[0, 2, 1, 0] = p_reset * (1.0 - p_drop_alone_slow)
transition_env[0, 2, 1, 1] = p_reset * (1.0 - p_drop_alone_fast)

# s=3 --> s'=0: (dropoff, dropoff) --> (pickup, pickup)
transition_env[0, 3, 0, 0] = p_reset
transition_env[0, 3, 0, 1] = p_reset
transition_env[0, 3, 1, 0] = p_reset
transition_env[0, 3, 1, 1] = p_reset

#### s'=1 #####
# s=0 --> s'=1: (pickup, pickup) --> (pickup, dropoff)
transition_env[1, 0, 0, 0] = (1 - p_mid) * p_mid
transition_env[1, 0, 0, 1] = p_high
transition_env[1, 0, 1, 0] = p_low
transition_env[1, 0, 1, 1] = p_low

# s=1 --> s'=1: (pickup, dropoff) --> (pickup, dropoff)
transition_env[1, 1, 0, 0] = (1 - p_drop_alone_slow) * (1.0 - p_reset)
transition_env[1, 1, 0, 1] = (1 - p_drop_alone_slow) * (1.0 - p_reset)
transition_env[1, 1, 1, 0] = (1 - p_drop_alone_fast) * (1.0 - p_reset)
transition_env[1, 1, 1, 1] = (1 - p_drop_alone_fast) * (1.0 - p_reset)

# s=2 --> s'=1: (dropoff, pickup) --> (pickup, dropoff)
transition_env[1, 2, 0, 0] = p_reset * p_drop_alone_slow
transition_env[1, 2, 0, 1] = p_reset * p_drop_alone_fast
transition_env[1, 2, 1, 0] = p_reset * p_drop_alone_slow
transition_env[1, 2, 1, 1] = p_reset * p_drop_alone_fast

# s=3 --> s'=1: (dropoff, dropoff) --> (pickup, dropoff)
transition_env[1, 3, 0, 0] = p_reset * (1.0 - p_reset)
transition_env[1, 3, 0, 1] = p_reset * (1.0 - p_reset)
transition_env[1, 3, 1, 0] = p_reset * (1.0 - p_reset)
transition_env[1, 3, 1, 1] = p_reset * (1.0 - p_reset)

#### s'=2 #####
# s=0 --> s'=2: (pickup, pickup) --> (dropoff, pickup)
transition_env[2, 0, 0, 0] = p_mid * (1 - p_mid)
transition_env[2, 0, 0, 1] = p_low
transition_env[2, 0, 1, 0] = p_high
transition_env[2, 0, 1, 1] = p_low

# s=1 --> s'=2: (pickup, dropoff) --> (dropoff, pickup)
transition_env[2, 1, 0, 0] = p_drop_alone_slow * p_reset
transition_env[2, 1, 0, 1] = p_drop_alone_slow * p_reset
transition_env[2, 1, 1, 0] = p_drop_alone_fast * p_reset
transition_env[2, 1, 1, 1] = p_drop_alone_fast * p_reset

# s=2 --> s'=2: (dropoff, pickup) --> (dropoff, pickup)
transition_env[2, 2, 0, 0] = (1.0 - p_reset) * (1.0 - p_drop_alone_slow)
transition_env[2, 2, 0, 1] = (1.0 - p_reset) * (1.0 - p_drop_alone_fast)
transition_env[2, 2, 1, 0] = (1.0 - p_reset) * (1.0 - p_drop_alone_slow)
transition_env[2, 2, 1, 1] = (1.0 - p_reset) * (1.0 - p_drop_alone_fast)

# s=3 --> s'=2: (dropoff, dropoff) --> (dropoff, pickup)
transition_env[2, 3, 0, 0] = (1.0 - p_reset) * p_reset
transition_env[2, 3, 0, 1] = (1.0 - p_reset) * p_reset
transition_env[2, 3, 1, 0] = (1.0 - p_reset) * p_reset
transition_env[2, 3, 1, 1] = (1.0 - p_reset) * p_reset

#### s'=3 #####
# s=0 --> s'=3: (pickup, pickup) --> (dropoff, dropoff)
transition_env[3, 0, 0, 0] = p_mid
transition_env[3, 0, 0, 1] = p_low
transition_env[3, 0, 1, 0] = p_low
transition_env[3, 0, 1, 1] = p_low

# s=1 --> s'=3: (pickup, dropoff) --> (dropoff, dropoff)
transition_env[3, 1, 0, 0] = 1.0 - transition_env[:, 1, 0, 0].sum()
transition_env[3, 1, 0, 1] = 1.0 - transition_env[:, 1, 0, 1].sum()
transition_env[3, 1, 1, 0] = 1.0 - transition_env[:, 1, 1, 0].sum()
transition_env[3, 1, 1, 1] = 1.0 - transition_env[:, 1, 1, 1].sum()

# s=2 --> s'=3: (dropoff, pickup) --> (dropoff, dropoff)
transition_env[3, 2, 0, 0] = 1.0 - transition_env[:, 2, 0, 0].sum()
transition_env[3, 2, 0, 1] = 1.0 - transition_env[:, 2, 0, 1].sum()
transition_env[3, 2, 1, 0] = 1.0 - transition_env[:, 2, 1, 0].sum()
transition_env[3, 2, 1, 1] = 1.0 - transition_env[:, 2, 1, 1].sum()

# s=3 --> s'=3: (dropoff, dropoff) --> (dropoff, dropoff)
transition_env[3, 3, 0, 0] = 1.0 - transition_env[:, 3, 0, 0].sum()
transition_env[3, 3, 0, 1] = 1.0 - transition_env[:, 3, 0, 1].sum()
transition_env[3, 3, 1, 0] = 1.0 - transition_env[:, 3, 1, 0].sum()
transition_env[3, 3, 1, 1] = 1.0 - transition_env[:, 3, 1, 1].sum()

#### s'=0 #####
# s=0 --> s'=0: (pickup, pickup) --> (pickup, pickup)
transition_env[0, 0, 0, 0] = 1.0 - transition_env[:, 0, 0, 0].sum()
transition_env[0, 0, 0, 1] = 1.0 - transition_env[:, 0, 0, 1].sum()
transition_env[0, 0, 1, 0] = 1.0 - transition_env[:, 0, 1, 0].sum()
transition_env[0, 0, 1, 1] = 1.0 - transition_env[:, 0, 1, 1].sum()

assert np.allclose(np.sum(transition_env, axis=0), 1.0)

self.transition_env = transition_env.astype(float)
self.pt = reward.astype(float)
\end{verbatim}

\end{document}